\newcommand{\doublewidetilde}[1]{{%
		\mathpalette\double@widetilde{#1}}}
\newcommand{\double@widetilde}[2]{%
		\sbox\z@{$\m@th#1\widetilde{#2}$}%
		\ht\z@=.5\ht\z@
		\widetilde{\box\z@}}
\newcommand*{\vtilde}[2][0pt]{
	\setbox0=\hbox{$#2$}%
	\widetilde{\mathrlap{\phantom{\rule{\wd0}{\ht0+{#1}}}}\smash{#2}}%
}
\newcommand\numberthis{\addtocounter{equation}{1}\tag{\theequation}}
\newtheorem{theorem}{Theorem}
\newtheorem{lemma}{Lemma}
\newtheorem{corollary}{Corollary}
\newcounter{eqnback}
\newcounter{eqncnt}
\begin{document}
%
\title{Joint Pilot Design and Uplink Power Allocation in Multi-Cell Massive MIMO Systems}

\author{\normalsize Trinh Van Chien \textit{Student Member}, \textit{IEEE}, Emil Bj\"{o}rnson, \textit{Senior Member}, \textit{IEEE}, and Erik G. Larsson, \textit{Fellow}, \textit{IEEE}
\thanks{
        The authors are with the Department of Electrical
    Engineering (ISY), Link\"{o}ping University, SE-581 83 Link\"{o}ping,
    Sweden
        (email: trinh.van.chien@liu.se; emil.bjornson@liu.se; erik.g.larsson@liu.se).
}
\thanks{This paper was supported by the European Union's Horizon 2020 research and innovation programme under grant agreement No 641985 (5Gwireless). It was also supported by ELLIIT and CENIIT.}
\thanks{
	Parts of this paper were presented at IEEE ICC 2017.
}
}

\maketitle

\begin{abstract}
This paper considers pilot design to mitigate pilot contamination and provide good service for everyone in multi-cell Massive multiple input multiple output (MIMO) systems. Instead of modeling the pilot design as a combinatorial assignment problem, as in prior works, we express the pilot signals using a pilot basis and treat the associated power coefficients as continuous optimization variables. We compute a lower bound on the uplink capacity for Rayleigh fading channels with maximum ratio detection that applies with arbitrary pilot signals. We further formulate the max-min fairness problem under power budget constraints, with the pilot signals and data powers as optimization variables. Because this optimization problem is non-deterministic polynomial-time hard due to signomial constraints, we then propose an algorithm to obtain a local optimum with polynomial complexity. Our framework serves as a benchmark for pilot design in scenarios with either ideal or non-ideal hardware. Numerical results manifest that the proposed optimization algorithms are close to the optimal solution obtained by exhaustive search for different pilot assignments and the new pilot structure and optimization bring large gains over the state-of-the-art suboptimal pilot design.
\end{abstract}

\begin{IEEEkeywords}
Massive MIMO, Pilot Design, Signomial Programming, Geometric Programming, Hardware Impairments.
\end{IEEEkeywords}

%
\IEEEpeerreviewmaketitle

\section{Introduction}
The demands on capacity and reliability in wireless cellular networks are continuously increasing. It is known that multiple input multiple output (MIMO) techniques can improve both capacity and reliability \cite{Tse2005a,  Jiang2007a, Li2010b}, but current systems only support up to eight antennas per base station (BS). While codebook-based channel acquisition is attractive in such small-scale MIMO systems, these methods are not scalable and unable to support the fifth generation ($5$G)  demands on spectral efficiency (SE) in non-line-of-sight conditions \cite{Bjornson2016b}. Massive MIMO was proposed in \cite{Marzetta2010a} as a possible solution and it has emerged as a key $5$G technology, because it offers significant improvements in both SE and energy efficiency \cite{Marzetta2016a,Marzetta2010a, Bjornson2016b, Wang2016a, Mohammed2014a}. By equipping the BSs with hundreds of antennas, mutual interference, thermal noise, and small-scale fading can be almost eliminated by virtue of the channel hardening and favorable propagation properties \cite{Marzetta2016a}. The BSs only need to use linear detection schemes, such as maximum ratio (MR) or zero forcing, to achieve nearly optimal performance \cite{Chien2017a}. In addition, the SE only depends on the large-scale fading coefficients, thus power control algorithms are easier to deploy than in small-scale MIMO systems, which are greatly affected by small-scale fading~\cite{Chien2016b}.

The uplink (UL) detection and downlink precoding in Massive MIMO  are based on instantaneous channel state information (CSI), which the BSs obtain from UL pilot signals. Mutually orthogonal pilots are desirable, but this is impractical in multi-cell scenarios  since the pilot overhead would be proportional to the total number of users in the entire system. The consequence is that the pilot signals need to be reused across cells. This leads to pilot contamination \cite{Jose2011b, Bogale2014}, where users sending the same pilot degrade each others channel estimation and cause large mutual interference. Hence, the pilot design is of key importance in Massive MIMO and should be optimized to mitigate the pilot contamination effects.

The baseline scheme for mitigating pilot contamination is to introduce a pilot reuse factor $f$, such that each pilot is only reused in $1/f$ of the cells. 
This approach, which was studied in \cite{Yang2015a,Li2015a,You2015a,Bjornson2016a}, can greatly reduce the pilot contamination, even if the pilots are randomly assigned within each cell. However, this gain comes at the cost of using $f$ times more pilots than in a system reusing the pilots in every cell. 
For any given cell, only a few users in the neighboring cells cause most of the potential pilot contamination, thus it is most important that these potential contaminators are assigned different pilots from the users in the given cell. Algorithms for coordinated pilot assignment were proposed in \cite{Jin2015a,Zhu2015a, Ahmadi2016a, Xu2015a}. A pilot reuse dictionary was defined in \cite{Jin2015a} and the corresponding pilot assignment problem was shown to be non-deterministic polynomial-time hard (NP-hard), which motivates the design of heuristic assignment mechanisms. Although \cite{Jin2015a} proposed several greedy algorithms, the optimized SE was far from that with exhaustive search over all pilot assignments.
Graph theory was used for pilot assignment in \cite{Zhu2015a}, by exploiting variations in the large-scale fading coefficients.
A method called ``smart pilot assignment'' was proposed in \cite{Xu2015a} to enhance the max-min fairness SE level, by optimizing a heuristic mutual interference metric.
Alternatively, \cite{Ahmadi2016a} formulated the pilot assignment problem as a potential game.
The numerical results in \cite{Zhu2015a, Ahmadi2016a, Xu2015a} show performance that is similar to an exhaustive search, but with a substantially lower computational complexity. Moreover, the authors of \cite{Yin2016a,You2016a} utilized particular channel properties to reduce channel estimation errors and mitigate pilot contamination. In particular, \cite{Yin2016a} utilized the orthogonality among different channels and an assumed low-rankness of the channel covariance matrices. An adjustable phase shift pilot construction was suggested in \cite{You2016a} based on the relationship between channel correlations in the frequency domain and their power angle-delay spectrum. However, all these algorithms rely on the assumption of fixed pilot and data power.

The pilot and  payload data powers are usually treated as constants in the Massive MIMO literature, but it is known from \cite{Zhao2013, Bogale2014} that the performance can be much improved by using the optimal power allocation, which balances the mutual interference levels. To improve the channel estimation quality, more power might also be assigned to the pilots than to the data transmissions \cite{Victor2017a, Guo2014a}. For single-cell systems, \cite{Victor2017a} showed that a pilot-data power imbalance is especially important for cell-edge users. Moreover, the power allocation that maximizes the sum SE is much different from the one that maximizes the max-min SE. Similar behaviors for multi-cell systems were observed in \cite{Guo2014a}. The authors in \cite{Liu2016a} considered power optimization problems with pilot reuse factors. To the best of our knowledge, no prior work analyzes joint pilot design and power control in Massive MIMO systems.

In this paper, we propose a novel pilot design and optimize the UL performance in multi-cell Massive MIMO systems, using the max-min fairness utility. Our main contributions are: 
\begin{itemize}
	\item We propose a new pilot design where the pilot signals are treated as continuous variables. We demonstrate that previous pilot designs are special cases of our proposal.
	
	\item Based on the proposed pilot design, we derive closed-form expressions of the SE with Rayleigh fading channels and MR detection, for the cases of ideal hardware and with hardware impairments. These expressions explicitly demonstrate how the SE is affected by mutual interference, noise, and pilot contamination.
	
	\item We formulate the max-min fairness problem for the proposed pilot design, by treating the pilot signals, pilot powers, and data powers as optimization variables. This is an NP-hard signomial program, so we propose an algorithm that finds a local optimum in polynomial time. For comparison the optimal solution by an exhaustive search of different pilot assignments is also investigated.
	
	\item The proposed algorithms are evaluated numerically, with either ideal hardware or  hardware impairments. The results show that our local solution is close to the global optimum by exhaustive search over different pilot assignments and demonstrate significant improvements over the heuristic algorithms in prior works.
	
\end{itemize}

A preliminary version of this work, focusing only on pilot optimization with fixed data powers, was presented in \cite{Chien2017b}.

The rest of this paper is organized as follows: Section \ref{Section: System Model} presents our proposed pilot structure and compares it with prior works. Lower bounds on the UL ergodic SE for arbitrary pilots are derived in Section \ref{Section: ULTransmission}, while Section \ref{Section: OptProblem} formulates the max-min fairness optimization problems and provides the global and local solutions. Sections \ref{section: Hardware_Impairments} and \ref{Section:CorrUncorr} extend our research to the case of hardware impairments and correlated Rayleigh fading, respectively. Finally, Section \ref{Section: Experimental Result} gives extensive numerical results and some conclusions are provided in Section \ref{Section: Conclusion}.

\textit{Notations}: Lower bold letters are used for vectors and upper cases are for matrices. $(\cdot)^T$ and $(\cdot)^H$ stand for regular transpose and Hermitian transpose, respectively. The superscript $\ast$ denotes the conjugate transpose of a complex number. $\mathbf{I}_{n}$ is the identity matrix of size $n \times n$. $\mathbb{C}^{m \times n} \; (\mathbb{R}^{m \times n})$ is the space of complex (real) $m\times n$ matrices, while $\mathbb{C}^{\tau_p}$ denotes the space of $\tau_p$-length complex vectors. $\mathbb{R}_{+}$ is the set of nonnegative real numbers. $\mathbb{E} \{\cdot \}$ denotes the expectation of a random variable and $\| \cdot \| $ is the Euclidean norm. Finally, $\mathcal{CN} (\cdot, \cdot)$ is the circularly symmetric complex Gaussian distribution, while $\mathcal{N} (\cdot, \cdot)$ is the normal distribution.

\section{Pilot Designs for Massive MIMO Systems} \label{Section: System Model}
We consider the UL of a multi-cell Massive MIMO system with $L$ cells.  Each cell consists of a BS equipped with $M$ antennas that serves $K$ single-antenna users. All tuples of cell and user indices belong to a set $\mathcal{S}$ defined as
\begin{equation}
\mathcal{S} = \left\{ (i,t): \; i \in \{ 1, \ldots, L\}, \; t \in \{ 1, \ldots, K \} \right \}.
\end{equation}
The radio channels vary over time and frequency. We divide the time-frequency plane into coherence intervals, each containing $\tau_c$ samples, such that the channel between each user and each BS is static and frequency flat. In each coherence block, the pilot signaling utilizes $\tau_p$ symbols and the remaining is dedicated to data transmission. In this paper, we focus on the UL, so the fraction $(1 - \tau_p / \tau_c)$ of the coherence interval is dedicated to UL data transmission. However, it is straightforward to extend our work to the downlink by using time division duplex (TDD) and channel reciprocity. We assume $1 \leq \tau_p < \tau_c$ to keep the training process feasible and stress that the case $\tau_p < KL$ is of practical importance since it gives rise to pilot contamination and since $L$ is large in practice.

\subsection{Proposed Pilot Design} \label{subsection: ProposedPilot}
 Let us denote the  $\tau_p$ mutually orthonormal basis vectors $\{\pmb{\phi}_1, \ldots, \pmb{\phi}_{\tau_p} \}$, where $\pmb{\phi}_b \in \mathbb{C}^{\tau_p}$ is a vector whose $b$th element has unit magnitude, and all other elements are equal to zero. The corresponding basis matrix is
\begin{equation}
\pmb{\Phi} = [\pmb{\phi}_1, \ldots, \pmb{\phi}_{\tau_p}].
\end{equation}
We assume that the pilot signals of the users can span arbitrarily over the above $\tau_p$ basis vectors. We aim at designing a pilot signal collection $ \{ \pmb{\psi}_{1,1}, \ldots, \pmb{\psi}_{L,K}\}$ comprising the $KL$ pilot signals used by all users in the network and each of them has the length of $\tau_p$ symbols.
The pilot signal of user~$k$ in cell~$l$ is $\pmb{\psi}_{l,k} = [\psi_{l,k}^1, \ldots, \psi_{l,k}^{\tau_p}]^T \in \mathbb{C}^{\tau_p}$ and the power that this user assigns to the $b$th pilot basis is denoted as $\hat{p}_{l,k}^b \geq 0$. Thus, the pilot of user~$k$ in cell~$l$ is
\begin{equation} \label{eq: ProposedPilotSequence}
\pmb{\psi}_ {l,k}=  \sum_{b =1}^{\tau_p} \sqrt{ \hat{p}_{l,k}^b } \pmb{ \phi}_{b}, \forall l,k.
\end{equation}
We stress that the pilot construction in \eqref{eq: ProposedPilotSequence}  can be used to create any set of $\tau_p$ orthogonal pilot signals (up to a unitary transformation) and many different sets of non-orthogonal signals. \footnote{The pilot signals in \eqref{eq: ProposedPilotSequence} are formed as linear combinations of basis vectors in the complex field. The new pilot design allows the use of nonorthogonal pilot signals even within a cell in order to get extra degrees of freedom to minimize the interference in the network.}   
The total pilot power consumption utilized by user~$k$ in cell~$l$ is $\| \pmb{\psi}_{l,k} \|^2 = \sum_{b=1}^{\tau_p} \hat{p}_{l,k}^b$  and we assume that it satisfies the power constraint
\begin{equation} \label{eq:Max-Power}
\frac{1}{\tau_p} \sum_{b=1}^{\tau_p} \hat{p}_{l,k}^b \leq P_{\textrm{max},l,k}, \forall  l,k,
\end{equation}
where $P_{\textrm{max},l,k}$ is the maximum pilot power for user~$k$ in cell~$l$. The inner product of two pilot signals $\pmb{\psi}_ {l,k}$ and $\pmb{\psi}_ {i,t}$ is
\begin{equation} \label{eq: Orthogonal_Property}
\begin{split}
\pmb{\psi}_ {l,k}^H \pmb{\psi}_ {i,t} = \sum_{b =1}^{\tau_p} \sqrt{ \hat{p}_{l,k}^b \hat{p}_{i,t}^b  }.
\end{split}
\end{equation}
These pilot signals are orthogonal if the product is zero, which only happens when they allocate their powers to different subsets of basis vectors. Otherwise, they are non-orthogonal and then the two users cause pilot contamination to each other. If the square roots of the powers allocated to the $K$ users in cell~$l$ are gathered in matrix form as
\begin{equation} \label{eq: ProposedPilotPower}
\pmb{P}_l = \begin{bmatrix}
\sqrt{\hat{p}_{l,1}^1}      & \sqrt{\hat{p}_{l,2}^1}     &  \cdots     &  \sqrt{\hat{p}_{l,K}^1} \\
\sqrt{\hat{p}_{l,1}^2}       &  \sqrt{\hat{p}_{l,2}^2}     & \cdots    & \sqrt{\hat{p}_{l,K}^2} \\
\vdots & \vdots &  \ddots      &  \vdots  \\
\sqrt{\hat{p}_{l,1}^{\tau_p}}      &  \sqrt{\hat{p}_{l,2}^{\tau_p}}     &  \cdots   &  \sqrt{\hat{p}_{l,K}^{\tau_p}}
\end{bmatrix} \in \mathbb{R}_{+}^{\tau_p \times K},
\end{equation}
 then the users in cell~$l$ utilize a pilot matrix defined as
\begin{equation} \label{eq: PilotStructure1}
\mathbf{\Psi}_l = [ \pmb{\psi}_{l,1}, \ldots,  \pmb{\psi}_{l,K} ] = \pmb{\Phi} \pmb{P}_l.
\end{equation}
We now describe the difference between this new pilot structure and the prior works, for example \cite{Xu2015a,Zhu2015a, Victor2017a, Guo2014a}. 

\subsection{Other Pilot Designs}
The works \cite{Xu2015a,Zhu2015a} considered the assignment of $\tau_p$ orthogonal pilot signals under the assumption of fixed equal pilot power. Using our notation, the pilot matrix in cell~$l$ is
\begin{equation} \label{eq: fixedPilotPower}
\widehat{\pmb{\Psi}}_l = [\hat{\pmb{\psi}}_{l,1}, \ldots, \hat{\pmb{\psi}}_{l,K} ] = \sqrt{\tilde{p}}  \pmb{\Phi} \pmb{\Pi}_l,
\end{equation}
where $0 < \tilde{p}  \leq \tau_p P_{\max,l,k}$ is the equal power level of all users. $\pmb{\Pi}_l \in \mathbb{R}_{+}^{\tau_p \times K}$ is a permutation matrix, that assigns the pilot signals to each user in cell~$l$. The assignment is optimized in \cite{Xu2015a,Zhu2015a} to minimize a heuristic mutual interference metric. Note that these works assume orthogonal pilot signals and equal power allocation, which are simplifications compared to \eqref{eq: PilotStructure1}. These assumptions are generally suboptimal. Apart from this, the selection of the optimal permutation matrices for cell~$l$ is a combinatorial problem, so to limit the computational complexity \cite{Xu2015a,Zhu2015a} and the references therein only study the special case of $\tau_p = K$.

The previous work \cite{Victor2017a} optimized the pilot powers to maximize functions of the SE, but the paper only considered a single cell without pilot contamination. The authors of \cite{Guo2014a} optimized the pilot powers to minimize the UL transmit power for a multi-cell system. This work assumed $\tau_p = K$ and a fixed pilot assignment. If $\tilde{p}_{l,k}$ is the pilot power of user~$k$ in cell~$l$, the square root of the power matrix allocated to the $K$ users in cell~$l$ is a diagonal matrix defined as
 \begin{equation}
\widetilde{\pmb{P}}_l =  \mathrm{diag} \left(\sqrt{\tilde{p}_{l,1}} , \ldots, \sqrt{\tilde{p}_{l,K}} \right),
 \end{equation}
where $ \mathrm{diag}(\mathbf{x})$ denotes the diagonal matrix with the vector $\mathbf{x}$ on the diagonal. The pilot matrix in cell~$l$ is then formulated as
\begin{equation} \label{eq: PilotStructure2}
\widetilde{\mathbf{\Psi}_l} = \pmb{\Phi} \widetilde{\pmb{P}}_l.
\end{equation}
Similar to \eqref{eq:Max-Power}, the pilot power at user~$k$ in cell~$l$ is limited as
\begin{equation}
 0 \leq \tilde{p}_{l,k} \leq \tau_p P_{\max,l,k}.
\end{equation}
Since orthogonal pilots and fixed pilot assignment are assumed, this is also a special case of \eqref{eq: PilotStructure1}. We can combine the pilot structure in \eqref{eq: PilotStructure2} and the idea of selecting a permutation matrix in \eqref{eq: fixedPilotPower} to jointly optimize the power allocation and pilot assignment. In particular, the pilot signals of the users in cell~$l$ are now defined as
\begin{equation} \label{eq: PilotStructure3}
 \breve{\mathbf{\Psi}}_l =  [\breve{\pmb{\psi}}_{l,1}, \ldots, \breve{\pmb{\psi}}_{l,K} ] = \pmb{\Phi}  \pmb{\Pi}_l \widetilde{\pmb{P}}_l.
\end{equation}
This modified pilot design is a special case of \eqref{eq: PilotStructure1} and has not been studied in prior works, but will be considered herein. In order to analyze the channel estimation, we define a pilot reuse set $\mathcal{P}_{l,k}$ including all tuples of cell and user indices that cause pilot contamination to user~$k$ in cell~$l$:
\begin{equation} \label{eq:ReuseSet1}
\mathcal{P}_{l,k} = \{ (i,t) \in \mathcal{S} : \;  \breve{\pmb{\psi}}_{i,t}^H \breve{\pmb{\psi}}_{l,k} \neq 0 \}.
\end{equation}
 We stress that designing an exhaustive search to obtain the best pilot assignment strategy is extremely computationally expensive.\footnote{For the first user in the first cell $(l=1, k=1 )$, there are $(K!)^{L-1}$ possibilities of $\mathcal{P}_{1,1}$. There are then $(K-1)!^{L-1}$ possible $\mathcal{P}_{1,2}$ and so on.} As in prior works, we only consider the case $\tau_p = K$ when using \eqref{eq: PilotStructure3} and we further assume that orthogonal pilots are used within each cell; that is, $\mathcal{P}_{l,k} \cap \mathcal{P}_{l,k'} = \emptyset$ for any user indices $k \neq k'$ in cell~$l$. To perform an exhaustive search, we need to construct a dictionary $\mathcal{D}$, see Fig.~\ref{Fig1Dic}, with all the possible combinations of pilot assignments in the network. Let $\chi_l^k \in \{1, \ldots, K \}$ denote the index of the pilot signal assigned to user~$k$ in cell~$l$. It follows that $\chi_l^k \neq\chi_l^{k'}$ for $k \neq k'$ since all users within a cell use different pilots.
The pilot assignment matrix $\mathbf{A} \in \{ 1, \ldots, K \}^{L \times K}$ containing the pilot indices of the $KL$ users is 
\begin{equation} \label{eq: matrixA}
\mathbf{A} = \begin{bmatrix}
\chi_1^1      &   \chi_1^2    &  \cdots     &  \chi_1^K \\
\chi_2^1      &  \chi_2^2     & \cdots    & \chi_2^K \\
\vdots & \vdots &  \ddots      &  \vdots  \\
\chi_L^1      & \chi_L^2    &  \cdots   &  \chi_L^K
\end{bmatrix} .
\end{equation}
Each row of $\mathbf{A}$ contains $1$ to $K$ and there are $K!$ different combinations, each defining a permutation matrix $\pmb{\Pi}_l$ for the pilot signals in \eqref{eq: fixedPilotPower} and \eqref{eq: PilotStructure3}. The dictionary $\mathcal{D} \overset{\Delta}{=} \{ \mathbf{A} \}$ contains all the $(K!)^L$ pilot assignment matrices. For each $\mathbf{A} \in \mathcal{D}$, we can extract the pilot reuse sets $\mathcal{P}_{l,k}^{\mathbf{A}}, \forall l,k$ as\footnote{Each collection $\{ \mathcal{P}_{l,k}^{\mathbf{A}} \}$ of pilot reuse sets is generated by $K!$ different $\mathbf{A}  \in \mathcal{D}$. By eliminating the $K!-1$ copies, the size of the dictionary $\mathcal{D}$ can be reduced to $(K!)^{L-1}$, which still grows rapidly with $K$ and $L$.}
 \begin{equation} \label{eq:ReuseSet2}
\mathcal{P}_{l,k}^{\mathbf{A}} = \{  (i,t) \in \mathcal{S}: \;  \chi_{l}^k = \chi_{i}^t \}. 
\end{equation}
The dictionary $\mathcal{D}$ 
will be later used to obtain the pilot assignment that maximizes the SE performance.
 
 	
\begin{figure}[t]
	\centering
	\includegraphics[trim=20.5cm 54.5cm 3.2cm 9cm, clip=true, width=3.3in]{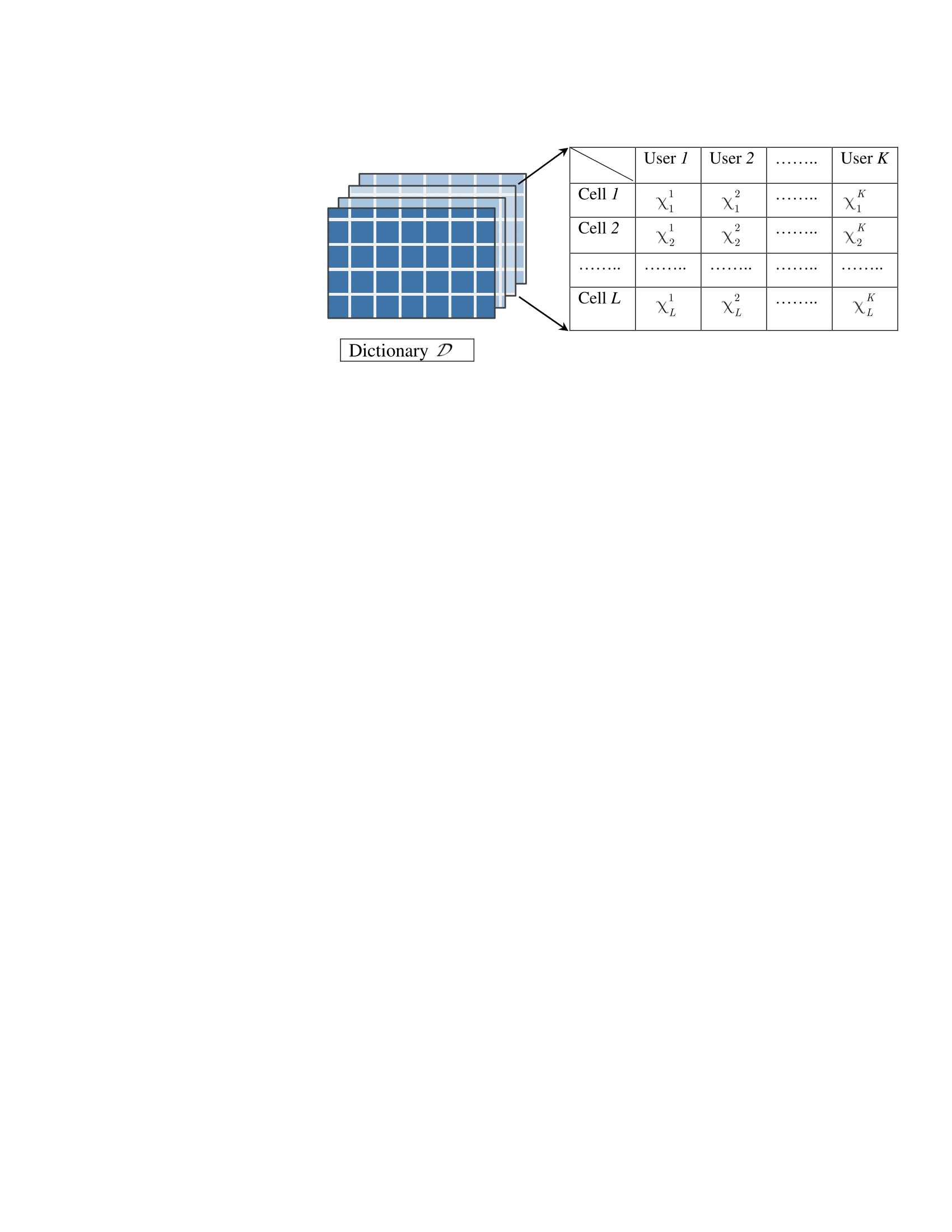}
	\caption{The dictionary $\mathcal{D}$ contains all possible pilot assignment indices for all users in the network.}
	\label{Fig1Dic}
\end{figure}
\section{Uplink Massive MIMO Transmission} \label{Section: ULTransmission}
This section provides ergodic SE expressions with arbitrary pilot signals, which are later used for pilot optimization.
 
\subsection{Channel Estimation with Arbitrary Pilots}
During the UL pilot transmission, the received signal $\mathbf{Y}_l \in \mathbb{C}^{M \times \tau_p}$ at the BS of cell~$l$ is 
\begin{equation} \label{eq: Received_Pilot}
\mathbf{Y}_l  = \sum_{(i,t) \in \mathcal{S}} \mathbf{h}_{i,t}^{l} \pmb{\psi}_{i,t}^{H} +  \mathbf{N}_l,
\end{equation}
where $\mathbf{h}_{i,t}^l \in \mathbb{C}^M$ denotes the channel between user~$t$ in cell~$i$ and BS~$l$. $\mathbf{N}_l \in \mathbb{C}^{M \times \tau_p}$ is the additive noise with independent elements distributed as $\mathcal{CN}(0, \sigma^2)$. Correlating $\mathbf{Y}_l$ in \eqref{eq: Received_Pilot} with the pilot $\pmb{\psi}_{l,k}$ of user~$k$ in cell~$l$, we obtain
\begin{equation} \label{eq:receivedPilotSiglk}
\mathbf{y}_{l,k} = \mathbf{Y}_{l} \pmb{\psi}_{l,k} = \sum_{(i,t) \in \mathcal{S}} \mathbf{h}_{i,t}^{l} \pmb{\psi}_{i,t}^H \pmb{\psi}_{l,k} + \mathbf{N}_{l} \pmb{\psi}_{l,k}.
\end{equation}
We consider uncorrelated Rayleigh fading since results obtained with this tractable model well matches the results obtained in non-line-of-sight measurements \cite{Gao2015a}. The channel between user~$t$ in cell~$i$ and BS~$l$ is distributed as
\begin{equation}
\mathbf{h}_{i,t}^l \sim \mathcal{CN} \left(  \mathbf{0} ,  \beta_{i,t}^l \mathbf{I}_M \right),
\end{equation}
where the variance $\beta_{i,t}^l$ determines the large-scale fading, including geometric attenuation and shadowing. 
By using minimum mean squared error (MMSE) estimation, the distributions of the channel estimate and estimation error when using the pilot structure in \eqref{eq: PilotStructure1} are given in Lemma~\ref{lemma: Distribution}.
\begin{lemma} \label{lemma: Distribution}
If the system uses the pilot structure in \eqref{eq: PilotStructure1}, the MMSE estimate of $\mathbf{h}_{l,k}^l$ based on $\mathbf{y}_{l,k}$ in \eqref{eq:receivedPilotSiglk} is computed as
\begin{equation} \label{eq: ChannelEstimate}
\hat{\mathbf{h}}_{l,k}^l = \frac{\beta_{l,k}^l \sum\limits_{b=1}^{\tau_p} \hat{p}_{l,k}^b }{ \sum\limits_{(i,t) \in \mathcal{S} } \beta_{i,t}^l \left( \sum\limits_{b=1}^{\tau_p}  \sqrt{\hat{p}_{i,t}^b \hat{p}_{l,k}^b}  \right)^2 + \sigma^2  \sum\limits_{b=1}^{\tau_p} \hat{p}_{l,k}^b } \mathbf{y}_{l,k}.
\end{equation}
The channel estimate is distributed as
\begin{equation}
\hat{ \mathbf{h} }_{l,k}^l \sim \mathcal{CN} \left( \mathbf{0}, \gamma_{l,k}^l   \mathbf{I}_M  \right),
\end{equation}
where
\begin{equation}
\gamma_{l,k}^l = \frac{ (\beta_{l,k}^l)^2 \left(\sum\limits_{b=1}^{\tau_p} \hat{p}_{l,k}^b \right)^2 }{ \sum\limits_{(i,t) \in \mathcal{S} } \beta_{i,t}^l \left( \sum\limits_{b=1}^{\tau_p} \sqrt{\hat{p}_{i,t}^b  \hat{p}_{l,k}^b} \right)^2  +  \sigma^2 \sum\limits_{b=1}^{\tau_p} \hat{p}_{l,k}^b }.
\end{equation}
The estimation error $\mathbf{e}_{l,k}^l  = \mathbf{h}_{l,k}^l - \hat{\mathbf{h}}_{l,k}^l$ is independent of the channel estimate and distributed as
\begin{equation}
\mathbf{e}_{l,k}^l
\sim \mathcal{CN} \left( \mathbf{0} , \left( \beta_{l,k}^l - \gamma_{l,k}^l \right)\mathbf{I}_M  \right).
\end{equation}

\end{lemma}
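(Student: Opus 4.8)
The plan is to exploit joint Gaussianity so that the MMSE estimator coincides with the linear MMSE estimator, reducing the whole problem to covariance bookkeeping. Since $\mathbf{h}_{l,k}^l$ and the processed observation $\mathbf{y}_{l,k}$ in \eqref{eq:receivedPilotSiglk} are jointly circularly-symmetric complex Gaussian and zero-mean, the MMSE estimate is the linear estimate $\hat{\mathbf{h}}_{l,k}^l = \mathbf{R}_{hy}\mathbf{R}_{yy}^{-1}\mathbf{y}_{l,k}$, where $\mathbf{R}_{hy} = \mathbb{E}\{\mathbf{h}_{l,k}^l \mathbf{y}_{l,k}^H\}$ and $\mathbf{R}_{yy} = \mathbb{E}\{\mathbf{y}_{l,k}\mathbf{y}_{l,k}^H\}$. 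The first structural observation I would make is that, by \eqref{eq: Orthogonal_Property}, each inner product $\pmb{\psi}_{i,t}^H\pmb{\psi}_{l,k} = \sum_{b=1}^{\tau_p}\sqrt{\hat{p}_{i,t}^b \hat{p}_{l,k}^b}$ is a real, nonnegative scalar; thus $\mathbf{y}_{l,k}$ is a scalar-weighted sum of the independent channel vectors plus a filtered noise term.

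First I would compute $\mathbf{R}_{hy}$. Because the channels of distinct users are independent and zero-mean and the noise is independent of all channels, every cross term vanishes except the one from the desired user, giving $\mathbf{R}_{hy} = \beta_{l,k}^l\left(\sum_{b=1}^{\tau_p} \hat{p}_{l,k}^b\right)\mathbf{I}_M$ (using that the inner product is real). Next I would compute $\mathbf{R}_{yy}$: independence again diagonalizes the sum, contributing $\sum_{(i,t)\in\mathcal{S}}\beta_{i,t}^l\left(\sum_{b=1}^{\tau_p}\sqrt{\hat{p}_{i,t}^b\hat{p}_{l,k}^b}\right)^2\mathbf{I}_M$, while the noise contributes $\mathbb{E}\{\mathbf{N}_l\pmb{\psi}_{l,k}\pmb{\psi}_{l,k}^H\mathbf{N}_l^H\} = \sigma^2\|\pmb{\psi}_{l,k}\|^2\mathbf{I}_M = \sigma^2\left(\sum_{b=1}^{\tau_p}\hat{p}_{l,k}^b\right)\mathbf{I}_M$, using the i.i.d.\ isotropic structure of $\mathbf{N}_l$. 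Both covariances are scalar multiples of $\mathbf{I}_M$, so $\mathbf{R}_{yy}^{-1}$ is immediate and the ratio reproduces exactly \eqref{eq: ChannelEstimate}.

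For the distribution of the estimate, I would note that $\hat{\mathbf{h}}_{l,k}^l$ is a deterministic scalar times the Gaussian vector $\mathbf{y}_{l,k}$, hence zero-mean circularly-symmetric Gaussian, with covariance equal to that scalar squared times $\mathbf{R}_{yy}$; since $\mathbf{R}_{yy}$ equals the denominator of \eqref{eq: ChannelEstimate} times $\mathbf{I}_M$, one factor cancels and the covariance collapses to $\gamma_{l,k}^l\mathbf{I}_M$ as claimed. Finally, for the error I would invoke the orthogonality principle: the MMSE error $\mathbf{e}_{l,k}^l$ is uncorrelated with the observation, hence with $\hat{\mathbf{h}}_{l,k}^l$; joint Gaussianity upgrades uncorrelatedness to independence, and the identity $\mathbf{h}_{l,k}^l = \hat{\mathbf{h}}_{l,k}^l + \mathbf{e}_{l,k}^l$ together with this independence yields $\mathrm{Cov}(\mathbf{e}_{l,k}^l) = \beta_{l,k}^l\mathbf{I}_M - \gamma_{l,k}^l\mathbf{I}_M$.

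The computation is essentially routine; the only thing to handle carefully is that the pilots are non-orthogonal, so $\mathbf{y}_{l,k}$ genuinely mixes contaminating users. The main obstacle — really the key simplification — is recognizing that all per-user covariances are isotropic (proportional to $\mathbf{I}_M$) and the pilot inner products are real scalars, so that $\mathbf{R}_{yy}$ reduces to a single scalar times $\mathbf{I}_M$ and the matrix inversion becomes trivial; this is what makes the closed form possible despite arbitrary, possibly non-orthogonal, pilot signals.
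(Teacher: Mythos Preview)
Your proposal is correct and is precisely the standard MMSE derivation the paper has in mind; the paper's own proof is a one-line reference to standard MMSE estimation techniques (Kay), and your covariance bookkeeping, exploitation of joint Gaussianity, and use of the orthogonality principle are exactly that argument made explicit. The key simplification you identify---that all covariances are scalar multiples of $\mathbf{I}_M$ so the matrix inversion is trivial---is the right observation.
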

\begin{proof}
The proof follows directly from standard MMSE estimation techniques in \cite{Kay1993a}.
\end{proof}
Lemma~\ref{lemma: Distribution} provides the MMSE estimator for the pilot design in \eqref{eq: PilotStructure1}. The pilot powers as well as inner products between pilot signals appear explicitly in the expressions. We now compute the channel estimate and estimation error of $\mathbf{h}_{l,k}^l$ when using the pilot structure in \eqref{eq: PilotStructure3}.
\begin{corollary} \label{co: Distribution}
If the system uses the alternative pilot structure in \eqref{eq: PilotStructure3}, the MMSE channel estimate in \eqref{eq: ChannelEstimate} is simplified to
\begin{equation}
\hat{\mathbf{h}}_{l,k}^l = \frac{\beta_{l,k}^l  }{ \sum\limits_{(i,t) \in \mathcal{P}_{l,k}^{\mathbf{A}} } \beta_{i,t}^l \tilde{p}_{i,t}  + \sigma^2  } \mathbf{y}_{l,k}.
\end{equation}
The estimate channel and estimation error are distributed as
\begin{equation}
\hat{ \mathbf{h} }_{l,k}^l \sim \mathcal{CN} \left( \mathbf{0}, \frac{ (\beta_{l,k}^l)^2  \tilde{p}_{l,k} }{ \sum\limits_{(i,t) \in \mathcal{P}_{l,k}^{\mathbf{A}}} \beta_{i,t}^l \tilde{p}_{i,t}  +  \sigma^2}   \mathbf{I}_M  \right),
\end{equation}
\begin{equation}
\mathbf{e}_{l,k}^l \sim \mathcal{CN} \left( \mathbf{0}, \beta_{l,k}^l \frac{ \sum\limits_{(i,t) \in  \mathcal{P}_{l,k}^{\mathbf{A}} \setminus (l,k)} \beta_{i,t}^l \tilde{p}_{i,t}  +  \sigma^2 }{ \sum\limits_{(i,t) \in \mathcal{P}_{l,k}^{\mathbf{A}}} \beta_{i,t}^l \tilde{p}_{i,t}  +  \sigma^2} \mathbf{I}_M \right).
\end{equation}
\end{corollary}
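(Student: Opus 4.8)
The plan is to recognize that the pilot design in \eqref{eq: PilotStructure3} is nothing but the special case of \eqref{eq: PilotStructure1} in which every user loads all of its power onto a single basis vector. Concretely, with $\tau_p = K$, a permutation matrix $\pmb{\Pi}_l$, and the diagonal $\widetilde{\pmb{P}}_l$, the $k$th column of $\breve{\mathbf{\Psi}}_l = \pmb{\Phi}\pmb{\Pi}_l\widetilde{\pmb{P}}_l$ is $\breve{\pmb{\psi}}_{l,k} = \sqrt{\tilde{p}_{l,k}}\,\pmb{\phi}_{\chi_l^k}$. Comparing with the expansion in \eqref{eq: ProposedPilotSequence}, this is exactly the choice $\hat{p}_{l,k}^b = \tilde{p}_{l,k}$ for $b = \chi_l^k$ and $\hat{p}_{l,k}^b = 0$ otherwise. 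Hence I can simply invoke Lemma~\ref{lemma: Distribution} and evaluate its general formulas at these coefficients, rather than redoing the MMSE derivation.

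Next I would substitute these coefficients into the sums that appear in \eqref{eq: ChannelEstimate} and in $\gamma_{l,k}^l$. The single-basis sum gives $\sum_{b=1}^{\tau_p} \hat{p}_{l,k}^b = \tilde{p}_{l,k}$, and likewise $\sum_{b=1}^{\tau_p}\hat{p}_{i,t}^b = \tilde{p}_{i,t}$. The key step is the cross term $\sum_{b=1}^{\tau_p}\sqrt{\hat{p}_{i,t}^b \hat{p}_{l,k}^b}$: since each of the two users occupies only one basis index, this sum is nonzero precisely when $\chi_i^t = \chi_l^k$, in which case it equals $\sqrt{\tilde{p}_{i,t}\tilde{p}_{l,k}}$. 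By the definition of the pilot reuse set in \eqref{eq:ReuseSet2}, the condition $\chi_i^t = \chi_l^k$ is exactly $(i,t)\in\mathcal{P}_{l,k}^{\mathbf{A}}$, so $\big(\sum_b \sqrt{\hat{p}_{i,t}^b \hat{p}_{l,k}^b}\big)^2 = \tilde{p}_{i,t}\tilde{p}_{l,k}$ on $\mathcal{P}_{l,k}^{\mathbf{A}}$ and vanishes off it. This collapses the sum over all of $\mathcal{S}$ to a sum over $\mathcal{P}_{l,k}^{\mathbf{A}}$ and produces a common factor $\tilde{p}_{l,k}$.

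I would then cancel the factor $\tilde{p}_{l,k}$ shared by numerator and denominator to obtain the simplified estimator and the variance $\gamma_{l,k}^l = (\beta_{l,k}^l)^2 \tilde{p}_{l,k} / \big(\sum_{(i,t)\in\mathcal{P}_{l,k}^{\mathbf{A}}}\beta_{i,t}^l\tilde{p}_{i,t} + \sigma^2\big)$, matching the stated channel-estimate distribution. For the error variance I would form $\beta_{l,k}^l - \gamma_{l,k}^l$ over the common denominator; using that $(l,k)\in\mathcal{P}_{l,k}^{\mathbf{A}}$ (a user always contaminates itself), the term $\beta_{l,k}^l\tilde{p}_{l,k}$ in the numerator cancels and leaves the ratio form in the statement, with $\mathcal{P}_{l,k}^{\mathbf{A}}\setminus(l,k)$ in the reduced numerator. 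There is no genuine obstacle here, as the derivation is a specialization and a bookkeeping exercise; the only points demanding care are the vanishing and identification of the cross term via \eqref{eq:ReuseSet2} and remembering to retain the self-index $(l,k)$ when splitting the denominator.
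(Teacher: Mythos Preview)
Your proposal is correct and follows essentially the same approach as the paper: specialize Lemma~\ref{lemma: Distribution} by replacing $\sum_{b=1}^{\tau_p}\hat{p}_{l,k}^b$ with $\tilde{p}_{l,k}$ and $\sum_{(i,t)\in\mathcal{S}}\beta_{i,t}^l\big(\sum_b\sqrt{\hat{p}_{i,t}^b\hat{p}_{l,k}^b}\big)^2$ with $\sum_{(i,t)\in\mathcal{P}_{l,k}^{\mathbf{A}}}\beta_{i,t}^l\tilde{p}_{i,t}\tilde{p}_{l,k}$, then simplify. Your write-up merely makes explicit the bookkeeping (the single-basis identification, the role of \eqref{eq:ReuseSet2}, and the cancellation of $\tilde{p}_{l,k}$) that the paper summarizes as ``some algebra.''
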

\begin{proof}
This follows from replacing the terms $\sum_{b=1}^{\tau_p} \hat{p}_{l,k}^b$  and $\sum_{(i,t) \in \mathcal{S} } \beta_{i,t}^l \left( \sum_{b=1}^{\tau_p}  \sqrt{\hat{p}_{i,t}^b \hat{p}_{l,k}^b}  \right)^2$ in Lemma~\ref{lemma: Distribution} by $\tilde{p}_{l,k}$ and $\sum_{(i,t) \in \mathcal{P}_{l,k}^{\mathbf{A}} } \beta_{i,t}^l \tilde{p}_{i,t} \tilde{p}_{l,k}$, and then doing some algebra. 
\end{proof}
Corollary~\ref{co: Distribution} reveals that the quality of the estimated channel heavily depends on both the pilot power control and the pilot reuse set $\mathcal{P}_{l,k}^{\mathbf{A}}$. A proper selection of $\mathcal{P}_{l,k}^{\mathbf{A}}$  mitigates channel estimation errors, and will also reduce the coherent interference during data transmission. Aligned with prior works, in the special case of $\tilde{p}_{l,k} = \tilde{p}, \forall l,k$, the channel estimate and estimation error are obtained for the pilot structure in \eqref{eq: fixedPilotPower}. We now use the distributions in Lemma~\ref{lemma: Distribution} and Corollary~\ref{co: Distribution} to derive lower bounds on the UL ergodic capacity.

\subsection{Uplink Data Transmission}
\setcounter{eqnback}{\value{equation}} \setcounter{equation}{31}
\begin{figure*}
	\begin{equation} \label{eq: SINR_MRC1}
	\mathrm{SINR}_{l,k}^{\mathrm{MR}} = \frac{ M (\beta_{l,k}^l)^2 p_{l,k} \left( \sum\limits_{b=1}^{\tau_p} \hat{p}_{l,k}^b \right)^2 }{  \left( \sum\limits_{(i,t) \in \mathcal{S} } \beta_{i,t}^l \left( \sum\limits_{b=1}^{\tau_p} \sqrt{ \hat{p}_{i,t}^{b}  \hat{p}_{l,k}^{b}} \right)^2  +  \sigma^2 \sum\limits_{b=1}^{\tau_p} \hat{p}_{l,k}^b \right) \left( \sum\limits_{(i,t) \in \mathcal{S} } p_{i,t}  \beta_{i,t}^l + \sigma^2 \right)   +  M \sum\limits_{(i,t) \in \mathcal{S} \setminus (l,k)} p_{i,t}  (\beta_{i,t}^l )^2  \left(\sum\limits_{b=1}^{\tau_p} \sqrt{ \hat{p}_{i,t}^{b}  \hat{p}_{l,k}^{b}} \right)^2  }.
	\end{equation}
\end{figure*}
\setcounter{eqncnt}{\value{equation}}
\setcounter{equation}{\value{eqnback}}

\setcounter{eqnback}{\value{equation}} \setcounter{equation}{34}
\begin{figure*}
	\begin{equation} \label{eq: SINR_MRC2}
	\vtilde[-0.5pt]{\mathrm{SINR}}_{l,k}^{\mathrm{MR}} =	\frac{ M (\beta_{l,k}^l)^2 p_{l,k} \tilde{p}_{l,k}}{  \left( \sum\limits_{(i,t) \in \mathcal{P}_{l,k}^{\mathbf{A}}} \beta_{i,t}^l  \tilde{p}_{i,t}     +  \sigma^2 \right) \left( \sum\limits_{(i,t) \in \mathcal{S}} p_{i,t}  \beta_{i,t}^l + \sigma^2 \right)   +  M \sum\limits_{(i,t) \in \mathcal{P}_{l,k}^{\mathbf{A}} \setminus (l,k) } p_{i,t} \tilde{p}_{i,t}  (\beta_{i,t}^l )^2   }.
	\end{equation}
	\hrulefill
\end{figure*}
\setcounter{eqncnt}{\value{equation}}
\setcounter{equation}{\value{eqnback}}
In the UL data transmission, user~$t$ in cell~$i$ transmits the signal $x_{i,t} \sim \mathcal{CN}(0,1)$. The $M \times 1$ received signal vector at BS~$l$ is the superposition of the transmitted signals
\begin{equation}
\mathbf{y}_l = \sum_{(i,t) \in \mathcal{S} } \sqrt{p_{i,t}} \mathbf{h}_{i,t}^l x_{i,t} + \mathbf{n}_l,
\end{equation}
where $p_{i,t}$ is the transmit power corresponding to the signal $x_{i,t}$ and the additive noise is $\mathbf{n}_l \sim \mathcal{CN} ( \mathbf{0}, \sigma^2 \mathbf{I}_M)$. To detect the transmitted signal, BS~$l$ selects a detection vector $\mathbf{v}_{l,k} \in \mathbb{C}^M$ and applies it to the received signal as
\begin{equation} \label{eq: Signal-Detection}
\mathbf{v}_{l,k}^H \mathbf{y}_l = \sum_{(i,t) \in \mathcal{S} } \sqrt{p_{i,t}}  \mathbf{v}_{l,k}^H \mathbf{h}_{i,t}^l x_{i,t} +  \mathbf{v}_{l,k}^H \mathbf{n}_l.
\end{equation}
A general lower bound on the UL ergodic capacity of user~$k$ in cell~$l$ is computed in \cite{Chien2017a} as
\begin{equation} \label{eq:RateProposedPilot}
R_{l,k} = \left( 1 - \frac{\tau_p}{\tau_c} \right) \log_2 \left(1 + \mathrm{SINR}_{l,k} \right),
\end{equation}
where the effective SINR value, $\mathrm{SINR}_{l,k}$, is
\begin{equation} \label{eq: SINR_k}
 \frac{ p_{l,k} | \mathbb{E} \{ \mathbf{v}_{l,k}^{H} \mathbf{h}_{l,k}^l  \} |^2 }{\sum\limits_{(i,t) \in \mathcal{S} } p_{i,t} \mathbb{E} \{ | \mathbf{v}_{l,k}^{H} \mathbf{h}_{i,t}^{l} |^2 \} - p_{l,k} | \mathbb{E} \{ \mathbf{v}_{l,k}^{H} \mathbf{h}_{l,k}^{l} \} |^2 + \sigma^2 \mathbb{E} \{ \| \mathbf{v}_{l,k} \|^2 \} }.
\end{equation}
The lower bound on the UL ergodic capacity in \eqref{eq:RateProposedPilot} is computed by using the use-and-then-forget bounding technique \cite{Marzetta2016a} and its tightness compared to the other possible bounds is discussed in Appendix D in \cite{Marzetta2016a}. Although the channel capacity for Massive MIMO in the case of imperfect CSI is unknown, we believe that the lower bound in \eqref{eq:RateProposedPilot} is quite close to the actual capacity. This is because the effective noise is comprised of a sum of many uncorrelated terms, it is close to Gaussian. This agrees with the worst-case-is-Gaussian assumption made when obtaining the bound. As a contribution of this paper, we compute a closed form expression for this lower bound in the case of MR detection with
\begin{equation} \label{eq:MRDetection}
 \mathbf{v}_{l,k} = \hat{\mathbf{h}}_{l,k}^l.
\end{equation}
\begin{lemma} \label{Lemma: Achievable_Rate}
This is a highly computationally scalable detection method for Massive MIMO systems.
If the system uses the pilot structure in \eqref{eq: PilotStructure1} and MR detection, the SE in \eqref{eq:RateProposedPilot} for user~$k$ in cell~$l$ becomes
	\begin{equation} \label{eq: RateMRC1}
	R_{l,k}^{\mathrm{MR}} =  \left( 1 - \frac{\tau_p}{\tau_c} \right) \log_2 \left(1 + \mathrm{SINR}_{l,k}^{\mathrm{MR}} \right),
	\end{equation}
	where  $\mathrm{SINR}_{l,k}^{\mathrm{MR}}$ is shown in \eqref{eq: SINR_MRC1}.
\end{lemma}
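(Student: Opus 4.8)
The plan is to substitute the MR detection vector $\mathbf{v}_{l,k}=\hat{\mathbf{h}}_{l,k}^l$ from \eqref{eq:MRDetection} into the general SINR expression \eqref{eq: SINR_k} and evaluate each of the three expectations using the estimate statistics in Lemma~\ref{lemma: Distribution}. First I would write the estimate as $\hat{\mathbf{h}}_{l,k}^l = c_{l,k}\,\mathbf{y}_{l,k}$ with $c_{l,k}=\beta_{l,k}^l\big(\sum_{b}\hat{p}_{l,k}^b\big)/D$, where $D$ denotes the denominator appearing in \eqref{eq: ChannelEstimate} (which is exactly the first factor in the denominator of \eqref{eq: SINR_MRC1}), and abbreviate the pilot inner products as $\rho_{i,t}\triangleq\pmb{\psi}_{i,t}^H\pmb{\psi}_{l,k}=\sum_{b=1}^{\tau_p}\sqrt{\hat{p}_{i,t}^b\hat{p}_{l,k}^b}$, which are real and nonnegative by \eqref{eq: Orthogonal_Property}. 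The desired-signal term is obtained from $\mathbf{h}_{l,k}^l=\hat{\mathbf{h}}_{l,k}^l+\mathbf{e}_{l,k}^l$ and the independence of estimate and error, giving $\mathbb{E}\{\mathbf{v}_{l,k}^H\mathbf{h}_{l,k}^l\}=\mathbb{E}\{\|\hat{\mathbf{h}}_{l,k}^l\|^2\}=M\gamma_{l,k}^l$, so the numerator of \eqref{eq: SINR_k} becomes $p_{l,k}M^2(\gamma_{l,k}^l)^2$; likewise the noise term equals $\sigma^2\mathbb{E}\{\|\hat{\mathbf{h}}_{l,k}^l\|^2\}=\sigma^2 M\gamma_{l,k}^l$.

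The core of the argument is the interference term $\mathbb{E}\{|\mathbf{v}_{l,k}^H\mathbf{h}_{i,t}^l|^2\}$ for each $(i,t)\in\mathcal{S}$, where the estimate and the interfering channel are correlated whenever $\rho_{i,t}\neq 0$. Since $\hat{\mathbf{h}}_{l,k}^l$ and $\mathbf{h}_{i,t}^l$ are jointly circularly-symmetric Gaussian, I would first compute the cross-covariance $\mathbb{E}\{\hat{\mathbf{h}}_{l,k}^l(\mathbf{h}_{i,t}^l)^H\}=c_{l,k}\rho_{i,t}\beta_{i,t}^l\mathbf{I}_M$ (only the $(i,t)$ channel inside $\mathbf{y}_{l,k}$ survives, the other channels and the noise being independent and zero-mean), and then write the orthogonal decomposition $\mathbf{h}_{i,t}^l=\alpha_{i,t}\hat{\mathbf{h}}_{l,k}^l+\mathbf{w}_{i,t}$ with $\alpha_{i,t}=c_{l,k}\rho_{i,t}\beta_{i,t}^l/\gamma_{l,k}^l$ and $\mathbf{w}_{i,t}$ independent of the estimate, having per-entry variance $\beta_{i,t}^l-|\alpha_{i,t}|^2\gamma_{l,k}^l$. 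Expanding the quadratic form, the cross term vanishes, and using the fourth moment $\mathbb{E}\{\|\hat{\mathbf{h}}_{l,k}^l\|^4\}=M(M+1)(\gamma_{l,k}^l)^2$ together with $\mathbb{E}\{|(\hat{\mathbf{h}}_{l,k}^l)^H\mathbf{w}_{i,t}|^2\}=M\gamma_{l,k}^l(\beta_{i,t}^l-|\alpha_{i,t}|^2\gamma_{l,k}^l)$, the $O(M)$ contributions combine to the compact form $\mathbb{E}\{|\mathbf{v}_{l,k}^H\mathbf{h}_{i,t}^l|^2\}=M^2(c_{l,k}\rho_{i,t}\beta_{i,t}^l)^2+M\gamma_{l,k}^l\beta_{i,t}^l$.

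To conclude, I would insert the three evaluations into \eqref{eq: SINR_k}. The $M^2$-scaled coherent term for $(i,t)=(l,k)$ equals $M^2(\gamma_{l,k}^l)^2$, because $c_{l,k}\rho_{l,k}\beta_{l,k}^l=\gamma_{l,k}^l$, and it cancels exactly against the subtracted $p_{l,k}|\mathbb{E}\{\mathbf{v}_{l,k}^H\mathbf{h}_{l,k}^l\}|^2$. Gathering the remaining $M\gamma_{l,k}^l\beta_{i,t}^l$ pieces and the noise term into the factor $M\gamma_{l,k}^l\big(\sum_{(i,t)}p_{i,t}\beta_{i,t}^l+\sigma^2\big)$, and retaining the residual pilot-contamination sum $M^2 c_{l,k}^2\sum_{(i,t)\neq(l,k)}p_{i,t}(\beta_{i,t}^l)^2\rho_{i,t}^2$, I would cancel the common factor $M\gamma_{l,k}^l$ and apply the identity $c_{l,k}^2/\gamma_{l,k}^l=1/D$ to clear $D$; rewriting $\gamma_{l,k}^l$, $\rho_{i,t}$, and $D$ in terms of the pilot powers $\hat{p}_{i,t}^b$ then reproduces \eqref{eq: SINR_MRC1}. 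I expect the main obstacle to be the interference expectation in the second paragraph, namely correctly tracking the correlation induced by non-orthogonal pilots (the $\rho_{i,t}$ factors), since this is precisely what generates the $M$-scaled coherent pilot-contamination term and must be separated cleanly from the noncoherent interference that scales linearly in $M$.
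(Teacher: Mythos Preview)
Your proposal is correct and arrives at \eqref{eq: SINR_MRC1}, but you compute the interference term via a different decomposition than the paper. The paper factors out the scalar $c_{l,k}$ and works directly with $\mathbb{E}\{|\mathbf{y}_{l,k}^H\mathbf{h}_{i,t}^l|^2\}$: it expands $\mathbf{y}_{l,k}$ into its channel and noise contributions, isolates the term carrying $\mathbf{h}_{i,t}^l$, and invokes the fourth-moment identity $\mathbb{E}\{\|\mathbf{h}_{i,t}^l\|^4\}=M(M+1)(\beta_{i,t}^l)^2$ for the \emph{raw} channel. You instead exploit the joint Gaussianity of $(\hat{\mathbf{h}}_{l,k}^l,\mathbf{h}_{i,t}^l)$ to regress $\mathbf{h}_{i,t}^l=\alpha_{i,t}\hat{\mathbf{h}}_{l,k}^l+\mathbf{w}_{i,t}$, and then apply the fourth-moment identity to the \emph{estimate} $\hat{\mathbf{h}}_{l,k}^l$. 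Both routes hinge on the same Gaussian moment fact and yield the identical closed form $M^2(c_{l,k}\rho_{i,t}\beta_{i,t}^l)^2+M\gamma_{l,k}^l\beta_{i,t}^l$; your regression approach has the advantage that the coherent pilot-contamination contribution is isolated structurally as $|\alpha_{i,t}|^2\mathbb{E}\{\|\hat{\mathbf{h}}_{l,k}^l\|^4\}$ from the outset, while the paper's direct expansion is slightly more elementary (no cross-covariance computation) but requires explicit bookkeeping of the noise-channel cross term and the sum over $(i',t')\neq(i,t)$. Your final algebraic cleanup, including the cancellation of the $(l,k)$ coherent term against the subtracted numerator and the use of $c_{l,k}^2/\gamma_{l,k}^l=1/D$, is correct and matches the paper's conclusion.
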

\begin{proof}
	The proof is available in Appendix \ref{Appendix: Proof-Achievable_Rate}.
\end{proof}
From \eqref{eq: SINR_MRC1}, we notice that it is always advantageous to add more BS antennas since the numerator grows linearly with $M$ (and only some terms in the denominator have the same scaling). The first term in the denominator represents non-coherent interference that only depends on the number of BSs and users, while it is independent of $M$. The second term in the denominator represents coherent interference caused by pilot contamination and it grows linearly with $M$. As a consequence, as $M \to \infty$, we have
\setcounter{eqnback}{\value{equation}} \setcounter{equation}{32}
\begin{equation}
\mathrm{SINR}_{l,k}^{\mathrm{MR}}  \rightarrow \frac{ (\beta_{l,k}^l)^2 p_{l,k} \left( \sum\limits_{b=1}^{\tau_p} \hat{p}_{l,k}^b \right)^2 }{ \sum\limits_{(i,t) \in \mathcal{S} \setminus (l,k)} p_{i,t}  (\beta_{i,t}^l )^2  \left(\sum\limits_{b=1}^{\tau_p} \sqrt{ \hat{p}_{i,t}^b  \hat{p}_{l,k}^b} \right)^2  }.
\end{equation}
This limit depends only on the pilot design (i.e., inner products between pilot signals) and data power. An optimized selection of the power terms $p_{l,k}, \hat{p}_{l,k}^b, \forall l,k,b,$ improves the SE by enhancing the channel estimation quality and reducing the coherent interference.

We also consider the achievable SE for the modified pilot structure in \eqref{eq: PilotStructure3} as shown in Corollary~\ref{Corrolary: Rate1}.
\begin{corollary} \label{Corrolary: Rate1}
 If the system uses the pilot structure in \eqref{eq: PilotStructure3}, a lower bound on the capacity for user~$k$ in cell~$l$ with uncorrelated Rayleigh fading channels and MR detection is
\begin{equation} \label{eq: Rate2}
R_{l,k}^{\mathrm{MR}} =  \left( 1 - \frac{\tau_p}{\tau_c} \right) \log_2 \left(1 + \vtilde[-0.5pt]{\mathrm{SINR}}_{l,k}^{\mathrm{MR}} \right),
\end{equation}
where the SINR value, $\vtilde[-0.5pt]{\mathrm{SINR}}_{l,k}^{\mathrm{MR}}$, is given in \eqref{eq: SINR_MRC2}.

\end{corollary}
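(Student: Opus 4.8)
The plan is to derive \eqref{eq: SINR_MRC2} as a direct specialization of the general SINR formula \eqref{eq: SINR_MRC1} established in Lemma~\ref{Lemma: Achievable_Rate}, exactly mirroring how Corollary~\ref{co: Distribution} was obtained from Lemma~\ref{lemma: Distribution}. The key observation is that the pilot structure in \eqref{eq: PilotStructure3} is the special case of \eqref{eq: PilotStructure1} in which every user $(l,k)$ places all of its pilot power on a single basis vector $\pmb{\phi}_{\chi_l^k}$. Formally, this means $\hat{p}_{l,k}^b = \tilde{p}_{l,k}$ when $b = \chi_l^k$ and $\hat{p}_{l,k}^b = 0$ otherwise. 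Substituting this sparse power profile into each of the pilot-dependent sums appearing in \eqref{eq: SINR_MRC1} is what drives the simplification.

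First I would evaluate the total pilot power, obtaining $\sum_{b=1}^{\tau_p} \hat{p}_{l,k}^b = \tilde{p}_{l,k}$, which replaces the squared factor $\left(\sum_b \hat{p}_{l,k}^b\right)^2$ in the numerator by $\tilde{p}_{l,k}^2$. Next I would analyze the cross-correlation term $\left(\sum_{b=1}^{\tau_p}\sqrt{\hat{p}_{i,t}^b \hat{p}_{l,k}^b}\right)^2$: since both users concentrate their power on single basis vectors, this inner product is nonzero only when $\chi_i^t = \chi_l^k$, i.e. precisely when $(i,t)$ belongs to the pilot reuse set $\mathcal{P}_{l,k}^{\mathbf{A}}$ defined in \eqref{eq:ReuseSet2}, in which case it equals $\tilde{p}_{i,t}\tilde{p}_{l,k}$. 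This indicator effect is what collapses the sums over the full index set $\mathcal{S}$ in \eqref{eq: SINR_MRC1} down to sums over $\mathcal{P}_{l,k}^{\mathbf{A}}$ in \eqref{eq: SINR_MRC2}.

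With these substitutions, the numerator of \eqref{eq: SINR_MRC1} becomes $M(\beta_{l,k}^l)^2 p_{l,k}\tilde{p}_{l,k}^2$, while the denominator becomes $\tilde{p}_{l,k}\big(\sum_{(i,t)\in\mathcal{P}_{l,k}^{\mathbf{A}}}\beta_{i,t}^l\tilde{p}_{i,t}+\sigma^2\big)\big(\sum_{(i,t)\in\mathcal{S}}p_{i,t}\beta_{i,t}^l+\sigma^2\big) + M\tilde{p}_{l,k}\sum_{(i,t)\in\mathcal{P}_{l,k}^{\mathbf{A}}\setminus(l,k)}p_{i,t}\tilde{p}_{i,t}(\beta_{i,t}^l)^2$. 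The final step is to observe that a common factor of $\tilde{p}_{l,k}$ appears in every denominator term, so cancelling one factor of $\tilde{p}_{l,k}$ between numerator and denominator yields exactly \eqref{eq: SINR_MRC2}. The rate expression \eqref{eq: Rate2} then follows immediately by inserting this SINR into the general capacity bound \eqref{eq:RateProposedPilot}.

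The calculation is essentially bookkeeping rather than a deep argument, so the only real care required is in the second step: correctly recognizing that the sparse single-basis-vector allocation makes the cross term $\left(\sum_b\sqrt{\hat{p}_{i,t}^b\hat{p}_{l,k}^b}\right)^2$ act as the indicator of membership in $\mathcal{P}_{l,k}^{\mathbf{A}}$, and tracking the single surviving power product $\tilde{p}_{i,t}\tilde{p}_{l,k}$ so that the $\tilde{p}_{l,k}$ cancellation is performed consistently across all three denominator terms. I would also note, as in the discussion following Corollary~\ref{co: Distribution}, that setting $\tilde{p}_{l,k}=\tilde{p}$ for all $(l,k)$ recovers the equal-power orthogonal design in \eqref{eq: fixedPilotPower} as a further special case.
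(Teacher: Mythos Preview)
Your proposal is correct and follows exactly the approach indicated in the paper, which simply states that the result is a special case of Lemma~\ref{Lemma: Achievable_Rate}. Your explicit identification of the single-basis-vector power profile and the resulting collapse of the cross terms to an indicator over $\mathcal{P}_{l,k}^{\mathbf{A}}$, followed by the cancellation of $\tilde{p}_{l,k}$, is precisely the bookkeeping that substantiates the one-line proof in the paper.
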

\begin{proof}
This follows as a special case of Lemma~\ref{Lemma: Achievable_Rate}.
\end{proof}
The SE in Corollary~\ref{Corrolary: Rate1} depends explicitly on the choice of $\mathcal{P}_{l,k}^{\mathbf{A}}, \forall l,k,$ thus the optimization of the pilot assignment is a combinatorial problem. We stress that the SINR expressions reflect the joint effects of pilot design, channel estimation quality, pilot contamination, and data power control, in contrast to the MSE that cannot distinguish between pilot contamination and noise. Hence, the SINR is a good metric to consider in the max-min fairness optimization as shown in the next section.

\section{Max-min Fairness Optimization} \label{Section: OptProblem}
In this section, we first utilize the SE expressions in Lemma~\ref{Lemma: Achievable_Rate} and Corollary~\ref{Corrolary: Rate1} to formulate max-min fairness problems with joint pilot and data optimization. We demonstrate that these optimization problems are NP-hard and propose an algorithm to find the globally optimal solution with the pilot design in \eqref{eq: PilotStructure3} by making an exhaustive search over all pilot assignments. In addition, instead of looking for the global optimum, an algorithm to obtain a locally optimal solution in polynomial time is presented when using the new pilot design in \eqref{eq: PilotStructure1}.

\subsection{Problem Formulation} \label{Subsect: Problem}
\setcounter{eqnback}{\value{equation}} \setcounter{equation}{46}
\begin{figure*}
	\begin{equation} \label{eq: SINR_MRCApproximation}
	\doublewidetilde{\mathrm{SINR}}_{l,k}^{\textrm{MR}} = \frac{ M (\beta_{l,k}^l)^2 p_{l,k} \prod\limits_{b=1}^{\tau_p} \left( \hat{p}_{l,k}^b / \alpha_{l,k}^b \right)^{2\alpha_{l,k}^b} }{  \left( \sum\limits_{(i,t) \in \mathcal{S} } \beta_{i,t}^l \left( \sum\limits_{b=1}^{\tau_p} \sqrt{ \hat{p}_{i,t}^{b}  \hat{p}_{l,k}^{b}} \right)^2  +  \sigma^2 \sum\limits_{b=1}^{\tau_p} \hat{p}_{l,k}^b \right) \left( \sum\limits_{(i,t) \in \mathcal{S} } p_{i,t}  \beta_{i,t}^l + \sigma^2 \right)   +  M \sum\limits_{(i,t) \in \mathcal{S} \setminus (l,k)} p_{i,t}  (\beta_{i,t}^l )^2  \left(\sum\limits_{b=1}^{\tau_p} \sqrt{ \hat{p}_{i,t}^{b}  \hat{p}_{l,k}^{b}} \right)^2  }.
	\end{equation}
	\hrulefill
\end{figure*}
\setcounter{eqncnt}{\value{equation}}
\setcounter{equation}{\value{eqnback}}
A key vision of Massive MIMO is to provide uniformly good quality of service for everyone in the network. We will investigate how to optimize the pilots and powers towards this goal. We consider the pilot and data powers as optimization variables. The max-min fairness optimization problem is first formulated for the proposed pilot design in \eqref{eq: PilotStructure1} as\footnote{The optimization problem \eqref{eq: Opt_Prob1} requires coordination among the cells to be solved, but the main target in this paper is to investigate how much the max-min fairness SE can be improved in multi-cell Massive MIMO by joint pilot design and UL power control. One potential way to deal with practical limitations such as backhaul signaling, delays, and scalability is to implement the optimization problem in a distributed manner using dual/primal decomposition  \cite{Bjornson2013d}.}
\setcounter{eqnback}{\value{equation}} \setcounter{equation}{35}
\begin{equation} \label{eq: Opt_Prob1}
\begin{aligned}
&\underset{\{ \hat{p}_{l,k}^b, p_{l,k} \geq 0 \}}{ \mathrm{maximize} } &&  \underset{(l,k)}{\min} \;  \log_2 \left( 1 + \mathrm{SINR}_{l,k}^{\textrm{MR}} \right) \\
& \,\,\text{subject to} && \frac{1}{\tau_p} \sum_{b=1}^{\tau_p} \hat{p}_{l,k}^b \leq P_{\max, l,k}, \forall l,k,\\
&&&  p_{l,k} \leq P_{\max, l,k}^d, \forall l,k,
\end{aligned}
\end{equation}
where $P_{\max, l,k}^d$ is the maximum power that users can provide for each data symbol. Note that this optimization problem jointly generates the pilot signals and performs power control on the pilot and data transmission. The epigraph-form representation of \eqref{eq: Opt_Prob1} is
\begin{subequations} \label{eq: Opt_Prob2}
\begin{alignat}{2}
& \underset{ \xi, \{ \hat{p}_{l,k}^b, p_{l,k} \geq 0 \}}{ \mathrm{maximize} }  && \, \, \, \,\, \, \, \xi \\
&  \text{subject to} &&  \mathrm{SINR}_{l,k}^{\textrm{MR}}  \geq \xi, \forall l,k, \label{P1:a} \\
&&& \frac{1}{\tau_p}\sum_{b=1}^{\tau_p} \hat{p}_{l,k}^b \leq P_{\max, l,k}, \forall l,k, \label{P1:b} \\
&&&  p_{l,k} \leq P_{\max, l,k}^d, \forall l,k. \label{P1:c} 
\end{alignat}
\end{subequations}
From the expression of the SINR constraints in \eqref{P1:a}, we realize that the proposed optimization problem is a signomial program.\footnote{A function $f(x_1, \ldots, x_{N_1}) = \sum_{n=1}^{N_2} c_n \prod_{m= 1}^{N_1} x_m^{a_{n,m}}$ defined in $\mathbb{R}_{+}^{N_1}$ is signomial with $N_2$ terms ($N_2 \geq 2$) if  the exponents $a_{n,m}$ are real numbers and the coefficients $c_n$ are also real but at least one must be negative. In case all $c_n, \forall n,$ are positive, $f(x_1, \ldots, x_{N_1})$ is a posynomial function.} Therefore, the max-min fairness optimization problem is NP-hard in general and seeking the optimal solution has very high complexity in any non-trivial setup \cite{Lange2014a}. However, the power constraints \eqref{P1:b} and \eqref{P1:c} ensure a compact feasible domain and make the SINRs continuous functions of the optimization variables. According to Weierstrass' theorem \cite{Horn2013a}, an optimal solution always exists. 

For the alternative pilot design in \eqref{eq: PilotStructure3}, the max-min fairness optimization problem is formulated as 
\begin{equation} \label{eq: Opt_Prob_Geo}
\begin{aligned}
& \underset{ \xi, \mathbf{A} \in \mathcal{D} , \{ \tilde{p}_{l,k}, p_{l,k} \geq 0 \}}{ \text{maximize} }  && \xi \\
&\,\,\,\, \,\,\,\,\, \text{subject to} &&  \vtilde[-0.5pt]{\mathrm{SINR}}_{l,k}^{\textrm{MR}}  \geq \xi, \forall l,k, \\
&&& \tilde{p}_{l,k} \leq \tau_p P_{\max, l,k}, \forall l,k,\\
&&&  p_{l,k} \leq P_{\max, l,k}^d, \forall l,k.
\end{aligned}
\end{equation}
The optimization problem \eqref{eq: Opt_Prob_Geo} is non-convex since it contains a combinatorial pilot assignment selection. Fortunately the optimal solution to this problem can be obtained by looking up every instance $\mathbf{A}$ in the dictionary $\mathcal{D}$. For each $\mathbf{A}$ we attain the pilot reuse sets $\mathcal{P}_{l,k}^{\mathbf{A}}, \forall l,k$, and then convert \eqref{eq: Opt_Prob_Geo} to a convex problem as shown in Corollary \ref{Co: Geometric-Program}.
\begin{corollary} \label{Co: Geometric-Program}
For a given pilot assignment matrix $\mathbf{A} \in \mathcal{D}$, \eqref{eq: Opt_Prob_Geo} reduces to the geometric program 
\begin{equation} \label{eq: Opt_Prob_Geov1}
\begin{aligned}
& \underset{ \xi, \{ \tilde{p}_{l,k}, p_{l,k} \geq 0 \}}{ \text{maximize} }  && \xi \\
& \,\,\,\,\text{subject to} &&  \vtilde[-0.5pt]{\mathrm{SINR}}_{l,k}^{\mathrm{MR}}  \geq \xi, \forall l,k, \\
&&& \tilde{p}_{l,k} \leq \tau_p P_{\max, l,k}, \forall l,k,\\
&&&  p_{l,k} \leq P_{\max, l,k}^d, \forall l,k.
\end{aligned}
\end{equation}
The optimal solution to \eqref{eq: Opt_Prob_Geov1} is obtained in polynomial time due to its convexity. By checking every instance $\mathbf{A}$ in the dictionary $\mathcal{D}$ and solving the corresponding problem \eqref{eq: Opt_Prob_Geov1}, the global optimum to \eqref{eq: Opt_Prob_Geo} is obtained as the highest objective value to \eqref{eq: Opt_Prob_Geov1}.
\end{corollary}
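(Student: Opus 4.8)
The plan is to establish three things in sequence: for a fixed assignment matrix $\mathbf{A}$, the continuous subproblem \eqref{eq: Opt_Prob_Geov1} is a geometric program and hence convex in logarithmic variables; it is therefore globally solvable in polynomial time; and the outer enumeration over the finite dictionary $\mathcal{D}$ recovers the global optimum of \eqref{eq: Opt_Prob_Geo}.

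First I would fix $\mathbf{A}$, so that every reuse set $\mathcal{P}_{l,k}^{\mathbf{A}}$ in \eqref{eq:ReuseSet2} becomes a fixed index set and the combinatorial dependence disappears. Writing the SINR in \eqref{eq: SINR_MRC2} as a ratio $\vtilde[-0.5pt]{\mathrm{SINR}}_{l,k}^{\mathrm{MR}} = N_{l,k}/D_{l,k}$, I would observe that the numerator $N_{l,k} = M (\beta_{l,k}^l)^2 p_{l,k}\tilde{p}_{l,k}$ is a monomial in the optimization variables $\{p_{l,k},\tilde{p}_{l,k}\}$, while the denominator $D_{l,k}$ is a posynomial: its first term is the product of the two posynomials $\left(\sum \beta_{i,t}^l\tilde{p}_{i,t}+\sigma^2\right)$ and $\left(\sum p_{i,t}\beta_{i,t}^l+\sigma^2\right)$, and its second term $M\sum p_{i,t}\tilde{p}_{i,t}(\beta_{i,t}^l)^2$ is a sum of monomials. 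Since products and sums of posynomials are posynomials, $D_{l,k}$ is a posynomial.

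Next I would recast each SINR constraint $\vtilde[-0.5pt]{\mathrm{SINR}}_{l,k}^{\mathrm{MR}}\ge\xi$ as $\xi D_{l,k}/N_{l,k}\le 1$. Treating $\xi$ as an additional variable, the left-hand side is the monomial $\xi$ times the posynomial $D_{l,k}$ divided by the monomial $N_{l,k}$; dividing a posynomial by a monomial again yields a posynomial (with possibly negative exponents, which is permitted), so each SINR constraint has the standard posynomial-$\le 1$ form. The power-budget constraints become the monomial-$\le 1$ constraints $\tilde{p}_{l,k}/(\tau_p P_{\max,l,k})\le 1$ and $p_{l,k}/P_{\max,l,k}^d\le 1$, and maximizing $\xi$ is equivalent to minimizing the monomial $\xi^{-1}$. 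Hence \eqref{eq: Opt_Prob_Geov1} is exactly a geometric program; under the logarithmic change of variables $x=\log(\cdot)$ it becomes convex, so its global optimum is computable in polynomial time by interior-point methods. The final step is a trivial decomposition: since $\mathbf{A}$ ranges over the finite set $\mathcal{D}$, the joint maximum over $(\xi,\{\tilde{p}\},\{p\},\mathbf{A})$ in \eqref{eq: Opt_Prob_Geo} factors as $\max_{\mathbf{A}\in\mathcal{D}}\left(\max_{\mathrm{cont.}}\right)$, whose inner maximum for each fixed $\mathbf{A}$ is precisely the GP \eqref{eq: Opt_Prob_Geov1}; solving the GP for every $\mathbf{A}$ and retaining the largest objective value yields the global optimum.

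I expect the only real obstacle to be the bookkeeping in the second step, namely verifying that after clearing the fraction every monomial carries a positive coefficient and that the division by $N_{l,k}$ preserves the posynomial structure rather than producing a difference of terms. Everything else is a direct invocation of the standard geometric-program-to-convex transformation together with the elementary decomposition of a maximum over a product domain.
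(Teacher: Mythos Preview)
Your proposal is correct and in fact more detailed than the paper's own treatment: the paper states the corollary without a separate proof, treating the GP structure of \eqref{eq: Opt_Prob_Geov1} for fixed $\mathbf{A}$ as immediate from the monomial/posynomial form of $\vtilde[-0.5pt]{\mathrm{SINR}}_{l,k}^{\mathrm{MR}}$ in \eqref{eq: SINR_MRC2}, and the enumeration over $\mathcal{D}$ as obvious. Your verification that $N_{l,k}$ is a monomial, $D_{l,k}$ is a posynomial, and that the constraint $\xi D_{l,k}/N_{l,k}\le 1$ has standard GP form is exactly the argument one would supply, and your concern about sign bookkeeping is unfounded here since all coefficients $M,\beta_{i,t}^l,\sigma^2$ are strictly positive and no subtraction occurs.
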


In more detail, the globally optimal solution to \eqref{eq: Opt_Prob_Geo} is obtained as shown in Algorithm~\ref{Algorithm1}. The $i$th iteration seeks the optimal solution $\xi^{ (i), \mathrm{opt}},\tilde{p}_{l,k}^{ (i), \mathrm{opt}}$, and $p_{l,k}^{ (i), \mathrm{opt}}, \forall l,k$ for given $\{\mathcal{P}_{l,k}^{\mathbf{A}}\}$ by considering \eqref{eq: Opt_Prob_Geov1} as the main cost function. The algorithm is terminated when the iteration index equals $(K!)^{L-1}$. The global optimum to the pilot and data power control together with the pilot reuse set are obtained from the maximum values of all $\{ \xi^{ (i), \mathrm{opt}} \}$. This is a practical issue. We are indeed able to find the solution, but it will take very long time.
 \begin{algorithm}[t]
 	\caption{Global solution to \eqref{eq: Opt_Prob_Geo} by exhaustive search} \label{Algorithm1}
 	\textbf{Input}: Set $i=1$; Select the initial values of $P_{\max,l,k}$ and $P_{\max,l,k}^d$ for $\forall k,l$; Set up the dictionary~$\mathcal{D}$.
 	\begin{itemize}
 		\item[1.] \emph{Iteration} $i$:
 		\begin{itemize}
 			\item[1.1.] Assign the reuse pilot set index $\mathcal{P}_{l,k}^{\mathbf{A}}, \forall l,k,$ by an instance $\mathbf{A} \in \mathcal{D}$.
 			\item[1.2.] Solve the following geometric program to obtain $\xi^{(i),\mathrm{opt}}, p_{l,k}^{ (i),\mathrm{opt}},$ and $\tilde{p}_{l,k}^{ (i),\mathrm{opt}}, \forall l,k :$
 			\begin{equation} \label{Algorithm1:GP}
 			\begin{aligned}
 			& \underset{ \xi^{(i)}, \{ \tilde{p}_{l,k}^{ (i)} , p_{l,k}^{ (i)} \geq 0 \}}{ \mathrm{maximize} }  && \xi^{(i)} \\
 			& \text{subject to} &&  \vtilde[-0.5pt]{\mathrm{SINR}}_{l,k}^{(i), \mathrm{MR}}  \geq \xi^{(i)}, \forall l,k, \\
 			&&&  \tilde{p}_{l,k}^{(i)} \leq \tau_p P_{\max, l,k}, \forall l,k,\\
 			&&&  p_{l,k}^{(i)} \leq  P_{\max, l,k}^d, \forall l,k.
 			\end{aligned}
 			\end{equation}
 		\end{itemize}
 		\item[2.] If $i = (K!)^{L-1}$ $\rightarrow$ Stop. Otherwise, go to Step 3.
 		\item[3.] Restore $\xi^{(i),\mathrm{opt}}$,$\tilde{p}_{l,k}^{(i),\mathrm{opt}}$, and $p_{l,k}^{(i),\mathrm{opt}}$. Set $i = i+1$, then go to Step 1.
 	\end{itemize}
 	\textbf{Output}: Set $i^{\mathrm{opt}} = \underset{i}{\mathrm{argmax}} \{\xi^{(i),\mathrm{opt}} \} $, then the optimal solutions: $\xi^{\mathrm{opt}} = \xi^{ (i^{\mathrm{opt}}),\mathrm{opt}} $, $\tilde{p}_{l,k}^{\mathrm{opt}} = \tilde{p}_{l,k}^{(i^{\mathrm{opt}} ),\mathrm{opt}}$, and  $p_{l,k}^{\mathrm{opt}} = p_{l,k}^{(i^{\mathrm{opt}} ),\mathrm{opt}}, \forall l,k.$
 \end{algorithm}
 Algorithm~\ref{Algorithm1} is computationally heavy since the number of iterations grows rapidly with $K$ and $L$,  but it obtains the global optimum to the max-min SE problem \eqref{eq: Opt_Prob_Geo}. Specifically, the main cost of each iteration in Algorithm~\ref{Algorithm1} is the geometric program \eqref{Algorithm1:GP} which includes $2KL + 1$ optimization variables and $3KL$ constraints. Based on \cite{Boyd2004a}, in general, the computational complexity of this algorithm is of the order of 
 \begin{equation} \label{eq:Complexity1}
 \mathcal{O}\left( (K!)^{L-1} \max\{ (2KL+1)^3, 3KL (2KL+1)^2, F_1 \}\right ), 
 \end{equation}
 where $F_1$ is the cost of evaluating the first and second derivatives of the objective and constraint functions in \eqref{Algorithm1:GP}. Therefore, this approach will serve as a benchmark for comparison in Section \ref{Section: Experimental Result}. For the sake of completeness, we also include another benchmark whereas the data powers are fixed at their maximum value then Algorithm~\ref{Algorithm1} is solved with respect to the remaining pilot power variables, as was done in  our previous work \cite{Chien2017b}.  
 
\subsection{Local Optimality Algorithm}

This subsection provides a method to obtain a local optimum to the optimization problem \eqref{eq: Opt_Prob2}. To this end, the signomial SINR constraints are converted to monomial ones by using the weighted arithmetic mean-geometric mean inequality \cite{Chiang2007b} stated in Lemma~\ref{Lemma: Local_Approximation}.\footnote{ A function $f(x_1, \ldots, x_{N_1}) = c\prod_{m=1}^{N_1} x_m^{a_m}$ defined in $\mathbb{R}_{+}^{N_1}$ is monomial if the coefficient $c >0$ and the exponents $a_m, \forall m,$ are real numbers. }
\begin{lemma} \cite[Lemma~1]{Chiang2007b}  \label{Lemma: Local_Approximation}
Assume that a posynomial function $g(x)$ is defined from the set of $\tau_p$ monomials  $\{ u_1 (x), \ldots, u_{\tau_p} (x) \}$ as
\begin{equation}
g(x) = \sum_{b=1}^{\tau_p} u_b (x),
\end{equation}
then it is lower bounded by a monomial function $\tilde{g}(x)$ as
\begin{equation}
g(x) \geq \tilde{g}(x) = \prod_{b=1}^{\tau_p} \left(  u_{b}(x) / \alpha_b \right)^{\alpha_b},
\end{equation}
where $\alpha_b$ is a non-negative weight corresponding to $u_{b} (x)$. We say that $\tilde{g}(x_0)$ is the best approximation to $g(x_0)$ near the point $x_0$ in the sense of the first order Taylor expansion, if the weight $\alpha_b $  is selected as
\begin{equation} \label{eq: WeightDef}
\alpha_b = \frac{u_b(x_0)}{\sum_{b=1}^{\tau_p} u_b (x_0)} .
\end{equation}
\end{lemma}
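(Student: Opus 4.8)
The plan is to read this statement as a weighted arithmetic mean--geometric mean (AM--GM) inequality together with a tangency claim, and to treat the bound and the choice of weights separately. First I would make explicit the standing normalization $\sum_{b=1}^{\tau_p}\alpha_b = 1$ on the weights, which is needed for the bound and is in any case satisfied by \eqref{eq: WeightDef} since $\sum_b u_b(x_0) = g(x_0)$. For the inequality itself I would invoke concavity of the logarithm: for positive reals $y_1,\dots,y_{\tau_p}$ and weights $\alpha_b \geq 0$ with $\sum_b \alpha_b = 1$, Jensen's inequality gives $\log\!\big(\sum_b \alpha_b y_b\big) \geq \sum_b \alpha_b \log y_b$, i.e. $\sum_b \alpha_b y_b \geq \prod_b y_b^{\alpha_b}$. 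Substituting $y_b = u_b(x)/\alpha_b$ turns the left side into $\sum_b u_b(x) = g(x)$ and the right side into $\prod_b (u_b(x)/\alpha_b)^{\alpha_b} = \tilde{g}(x)$, which is precisely $g(x) \geq \tilde{g}(x)$. This step is immediate once the substitution is spotted.

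Next I would analyze when the bound is tight at the expansion point $x_0$. The equality condition in AM--GM is that all the ratios $y_b = u_b(x_0)/\alpha_b$ coincide; imposing this together with $\sum_b \alpha_b = 1$ forces $\alpha_b \propto u_b(x_0)$, hence exactly $\alpha_b = u_b(x_0)/g(x_0)$ as in \eqref{eq: WeightDef}. Conversely, plugging this choice back in gives $u_b(x_0)/\alpha_b = g(x_0)$ for every $b$, so $\tilde{g}(x_0) = \prod_b g(x_0)^{\alpha_b} = g(x_0)^{\sum_b \alpha_b} = g(x_0)$. Thus \eqref{eq: WeightDef} is singled out as the unique normalized weight vector for which $\tilde{g}$ touches $g$ at $x_0$, which establishes the zeroth-order (value-matching) part of the first-order Taylor claim.

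For the first-order (gradient-matching) part, the cleanest route is to note that $h := g - \tilde{g} \geq 0$ everywhere by the inequality just proved, with $h(x_0) = 0$; since $x_0$ lies in the open positive orthant where $g$ and $\tilde{g}$ are smooth, $x_0$ is an interior global minimizer of $h$, so $\nabla h(x_0) = 0$ and the gradients of $g$ and $\tilde{g}$ agree at $x_0$ automatically. If a direct check is preferred, I would differentiate the logarithms: $\partial_j \log g = \big(\sum_b \partial_j u_b\big)/g$ and $\partial_j \log \tilde{g} = \sum_b \alpha_b\, \partial_j u_b/u_b$, and at $x_0$ the choice $\alpha_b = u_b(x_0)/g(x_0)$ collapses the second expression to the first; combined with $g(x_0) = \tilde{g}(x_0)$ and $\nabla f = f\,\nabla \log f$ this gives $\nabla g(x_0) = \nabla \tilde{g}(x_0)$. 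Either way, $g$ and $\tilde{g}$ share the same value and gradient at $x_0$, i.e. the same first-order Taylor expansion, which is the precise sense in which \eqref{eq: WeightDef} is the best local monomial approximation.

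The argument is essentially routine calculus, so I do not expect a genuine obstacle. The only points needing care are the implicit normalization $\sum_b \alpha_b = 1$, without which neither the inequality nor the tightness identity holds, and pinning down the phrase ``best approximation in the sense of the first-order Taylor expansion,'' which I interpret as matching both value and gradient at $x_0$ (equivalently, tangency of $\tilde{g}$ to $g$), so that the verifications above are exactly what is demanded.
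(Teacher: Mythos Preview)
Your proof is correct and is the standard weighted AM--GM argument for this lemma. The paper itself provides no proof; it simply quotes the result from \cite[Lemma~1]{Chiang2007b} and refers to it as the weighted arithmetic mean--geometric mean inequality, so your Jensen/AM--GM derivation together with the value-and-gradient matching at $x_0$ is exactly the content being cited.
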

By using this lemma, the max-min fairness optimization problem \eqref{eq: Opt_Prob2} is converted to a geometric program by bounding the term $\sum_{b=1}^{\tau_p}  \hat{p}_{l,k}^b$ in the numerators of the SINR constraints:
\begin{equation} \label{eq_: Power_Approximation}
 \sum_{b=1}^{\tau_p}  \hat{p}_{l,k}^b \geq \prod_{b=1}^{\tau_p} \left( \hat{p}_{l,k}^b / \alpha_{l,k}^b \right)^{\alpha_{l,k}^b},
\end{equation}
where $\alpha_{l,k}^b$ is the weight value corresponding to $\hat{p}_{l,k}^b$. This leads to a lower bound on the SINR value for user~$k$ in cell~$l$ obtained as
\begin{equation} \label{eq: SINRBound}
 \mathrm{SINR}_{l,k}^{\textrm{MR}} \geq  \doublewidetilde{\mathrm{SINR}}_{l,k}^{\textrm{MR}},
\end{equation}
where the $\doublewidetilde{\mathrm{SINR}}_{l,k}^{\textrm{MR}}$ value is presented in \eqref{eq: SINR_MRCApproximation}.

The optimal solution $\xi$ to the max-min SE optimization problem \eqref{eq: Opt_Prob2} is lower bounded by solving the geometric program
\setcounter{eqnback}{\value{equation}} \setcounter{equation}{47}
\begin{equation} \label{eq: Opt_Prob3}
\begin{aligned}
& \underset{ \xi, \{ \hat{p}_{l,k}^b, p_{l,k} \geq 0 \}}{ \mathrm{maximize} }  && \xi \\
& \text{subject to} &&  \doublewidetilde{\mathrm{SINR}}_{l,k}^{\textrm{MR}}  \geq \xi, \forall l,k, \\
&&& \frac{1}{\tau_p} \sum_{b=1}^{\tau_p} \hat{p}_{l,k}^b \leq P_{\max, l,k}, \forall l,k,\\
&&&  p_{l,k} \leq P_{\max, l,k}^d, \forall l,k.\\
\end{aligned}
\end{equation}
By virtue of the successive approximation technique \cite{Marques1978a}, a locally optimal  Karush-Kuhn-Tucker (KKT) point to the max-min fairness optimization problem \eqref{eq: Opt_Prob2} can be obtained  if we solve \eqref{eq: Opt_Prob3} iteratively as shown in Theorem \ref{Theorem: KKTpoint}.
\begin{theorem} \label{Theorem: KKTpoint}
Selecting an initial point $\hat{p}_{l,k}^{b, (0)}, \forall l,k,b,$ in the feasible domain and solving \eqref{eq: Opt_Prob3} in an iterative manner by consecutively updating the weight values $\alpha_{l,k}^b$ from the optimal powers of the previous iteration, the solution will converge to a KKT local point to  \eqref{eq: Opt_Prob2}.
\end{theorem}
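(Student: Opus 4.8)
The plan is to recognize Theorem~\ref{Theorem: KKTpoint} as a special instance of the inner (successive) convex approximation framework of \cite{Marques1978a}, and to verify the three hypotheses under which that framework certifies convergence to a KKT point. Writing each SINR constraint in \eqref{eq: Opt_Prob2} through its monomial surrogate $\doublewidetilde{\mathrm{SINR}}_{l,k}^{\textrm{MR}}$ in \eqref{eq: SINR_MRCApproximation}, the three properties I would establish are: (P1) the surrogate is a global lower bound, i.e. $\doublewidetilde{\mathrm{SINR}}_{l,k}^{\textrm{MR}} \leq \mathrm{SINR}_{l,k}^{\textrm{MR}}$ on the whole feasible domain, which is immediate from the AM--GM inequality of Lemma~\ref{Lemma: Local_Approximation} applied to $\sum_{b} \hat{p}_{l,k}^b$ as in \eqref{eq_: Power_Approximation}; (P2) the surrogate is tight at the current expansion point $\hat{p}_{l,k}^{b,(i-1)}$, which holds precisely because the weights are refreshed each iteration according to \eqref{eq: WeightDef}; and (P3) the gradients of the surrogate and of the true constraint agree at that point, which is the first-order Taylor matching asserted in Lemma~\ref{Lemma: Local_Approximation}.

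With (P1)--(P3) in hand, I would argue convergence in two stages. First, monotonicity of the objective: because of (P1) the problem \eqref{eq: Opt_Prob3} is an inner approximation of \eqref{eq: Opt_Prob2}, and because of (P2) the previous iterate $(\hat{p}_{l,k}^{b,(i-1)}, p_{l,k}^{(i-1)})$ remains feasible for the subproblem solved at iteration $i$ while attaining the same objective value; hence the optimal values satisfy $\xi^{(i)} \geq \xi^{(i-1)}$. The power constraints \eqref{P1:b}--\eqref{P1:c} make the feasible set compact, so (as already noted via Weierstrass' theorem) $\xi$ is bounded above; the monotone bounded sequence $\{\xi^{(i)}\}$ therefore converges, and compactness supplies a convergent subsequence of iterates with limit $(\hat{p}^{\star}, p^{\star})$.

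Second, I would show that such a limit point satisfies the KKT conditions of the original problem \eqref{eq: Opt_Prob2}. Each subproblem \eqref{eq: Opt_Prob3} is a geometric program, hence convex after the standard logarithmic change of variables, so its solution is characterized by its own KKT system with some multipliers. At a fixed point of the iteration the expansion point coincides with the solution, so by (P2) the surrogate constraints and the true constraints take equal values there, and by (P3) their gradients coincide as well. Consequently the KKT stationarity and complementary-slackness equations of the convex subproblem, evaluated at the fixed point, are term-by-term identical to those of the signomial program \eqref{eq: Opt_Prob2}, so the same multipliers certify that $(\hat{p}^{\star}, p^{\star})$ is a KKT point of \eqref{eq: Opt_Prob2}.

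The hardest part will be making the fixed-point/KKT transfer rigorous rather than merely formal: one must confirm that the limit point is genuinely a fixed point of the update, i.e. that refreshing the weights via \eqref{eq: WeightDef} at the limit reproduces the same solution, and that a constraint qualification holds along the sequence so that the subproblem multipliers converge. Verifying (P3) explicitly---differentiating $\prod_{b} (\hat{p}_{l,k}^b/\alpha_{l,k}^b)^{\alpha_{l,k}^b}$ and matching it to the gradient of $\sum_{b} \hat{p}_{l,k}^b$ at the expansion point under the weight choice \eqref{eq: WeightDef}---is the technical linchpin, since the entire equivalence of the two KKT systems rests on that gradient identity.
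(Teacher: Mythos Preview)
Your proposal is correct and follows essentially the same route as the paper's proof: both cast the iteration in the inner-approximation framework of \cite{Marques1978a}, verify the three hypotheses (global lower bound, tightness, and gradient matching at the expansion point) induced by the AM--GM surrogate from Lemma~\ref{Lemma: Local_Approximation}, use them to build the monotone chain $\xi^{(i+1),\mathrm{opt}} \geq \xi^{(i),\mathrm{opt}}$, invoke boundedness of the feasible set for convergence, and then transfer the KKT system of the geometric subproblem to that of the signomial program via (P2)--(P3). If anything, you are more explicit than the paper about the subtleties of the KKT transfer (fixed-point identification, constraint qualification, multiplier convergence), which the paper dispatches in one sentence by citing Slater's condition and Theorem~1 of \cite{Marques1978a}.
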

\begin{proof}
The proof is adapted from the general framework in \cite{Marques1978a} and is sketched in Appendix~\ref{Appendix: Proof-Theorem:KKTPoint}.
\end{proof}
In particular, we first select the initial powers $\hat{p}_{l,k}^{b,(0)}, \forall l,k,b$ that satisfy $\hat{p}_{l,k}^{b,(0)} \geq 0$, $\sum_{b=1}^{\tau_p} \hat{p}_{l,k}^{b,(0)} \leq \tau_p P_{\max, l,k}$ . Then the corresponding weight values are computed as in \eqref{eq: WeightDef}. Furthermore, in each iteration, the SINR constraints are converted to the corresponding monomials by bounding the pilot power of user~$k$ in cell~$l$ as in \eqref{eq: SINR_MRCApproximation}, by using the weight values computed from the optimal pilot powers in the previous iteration. The pilot and data allocation solution is obtained by solving the geometric program \eqref{eq: Opt_Prob3} before the weight values are updated again at the end of each iteration. We repeat the procedure until this algorithm has converged to a KKT point. The convergence can be declared, for example, when the variation between two consecutive iterations is sufficient small. The proposed algorithm for obtaining a locally optimal solution is summarized in Algorithm~\ref{Algorithm2}. Note that one can also fix the data powers and only optimize the pilot signals in Algorithm~\ref{Algorithm2}, as was done in our previous work \cite{Chien2017b}. Algorithm~\ref{Algorithm2} involves optimization with $KL(\tau_p +1) +1$ variables and $3KL$ constraints, and it has a computational complexity of the order of~\cite{Boyd2004a} \footnote{The exact complexity or the runtime of the proposed algorithms are not suitable metrics since they depend significantly on the computer configuration and how much time is spent to optimize the implementations. However \eqref{eq:Complexity1} and \eqref{eq:Complexity2} give basic insights into the general computational complexity scaling.}
	\begin{equation} \label{eq:Complexity2}
	\mathcal{O}\left( N \max\{ (KL(\tau_p + 1) + 1)^3, 3KL (KL(\tau_p + 1) + 1)^2, F_2 \}\right ),
	\end{equation}
where $F_2$ is the cost of evaluating the first and second derivatives of the objective and constraint functions in \eqref{eq: Opt_Prob3}. $N$ is the number of iterations needed for this algorithm to converge to the KKT point. Even though each iteration in Algorithm~\ref{Algorithm2} is more costly than in Algorithm~\ref{Algorithm1} since we carefully design powers for all pilot signals, the successive approximation approach converges after only a few iterations. 
\begin{algorithm}[t]
\caption{Successive approximation algorithm for \eqref{eq: Opt_Prob2}} \label{Algorithm2}
\textbf{Input}: Set $i=1$; Select the maximum powers $P_{\max, l,k}$ and $P_{\max, l,k}^d$  for $\forall l,k;$ Select the initial values of powers $\hat{p}_{l,k}^{b,(0)}$ for $\forall l,k,b$; Compute the weight values $\alpha_{l,k}^{b,(1)} = \hat{p}_{l,k}^{b,(0)} /\sum_{b=1}^{\tau_p} \hat{p}_{l,k}^{b,(0)}, \forall l,k,b.$
\begin{itemize}
  \item[1.] \emph{Iteration} $i$:
  \begin{itemize}
 \item[1.1.] Solve the geometric program \eqref{eq: Opt_Prob3} with $\alpha_{l,k}^b = \alpha_{l,k}^{b,(i)}$ to get the optimal values $\xi^{(i), \mathrm{opt}}$, $\hat{p}_{l,k}^{b, (i), \mathrm{opt}}, \forall l,k,b,$ and $p_{l,k}^{(i), \mathrm{opt}}, \forall l,k.$
 \item[1.2.] Update the weight values: $\alpha_{l,k}^{b,(i+1)} = \hat{p}_{l,k}^{b,(i) ,\mathrm{opt}} /\sum_{b=1}^{\tau_p} \hat{p}_{l,k}^{b,(i), \mathrm{opt}}, \forall l,k,b.$
\end{itemize}
 \item[2.] If Stopping criterion satisfied  $\rightarrow$ Stop. Otherwise, go to Step 3.
 \item[3.] Set $\xi^{\mathrm{opt}} = \xi^{(i),\mathrm{opt}}$, $\hat{p}_{l,k}^{b,\mathrm{opt}} = \hat{p}_{l,k}^{b,(i),\mathrm{opt}}, \forall l,k,b$, and  $p_{l,k}^{\mathrm{opt}} = p_{l,k}^{(i),\mathrm{opt}}, \forall l,k$; Set $i = i+1$, go
   to Step 1.
\end{itemize}
\textbf{Output}: The solutions $\xi^{\mathrm{opt}}, \hat{p}_{l,k}^{b,\mathrm{opt}}, \forall l,k,b, $ and $p_{l,k}^{\mathrm{opt}}, \forall l,k.$
\end{algorithm}

\section{Pilot Optimization for Cellular Massive MIMO Systems with Hardware Impairments} \label{section: Hardware_Impairments}
The previous sections considered Massive MIMO with ideal transceiver hardware. However practical transceivers are non-ideal in the sense of creating distortions that can have a substantial impact on the SE. Thanks to the non-coherent combining of the independent distortion caused at each of the BS antennas, the hardware impairments at the BS can be neglected in Massive MIMO \cite{Bjornson2014a}. However, the distortion caused by the single-antenna users leads to coherent self-interference, which can be viewed as pilot contamination that the user causes to itself. The pilot optimization problem is fundamentally different when accounting for self-interference, because pilot contamination from distant users can now be neglected when it is substantially weaker than the self-interference. In this section, we investigate how the distortions from hardware impairments at the users affect the proposed pilot design and the optimization problems.

\subsection{Channel Estimation under Hardware Impairments}
We model the distortion caused by the hardware impairments at a user as a reduction of the signal amplitude by $\sqrt{1- \epsilon^2}$ and the addition of Gaussian distortion with a power that equals the reduction in signal power \cite{Bjornson2014a}. We refer to $\epsilon$ as the impairment level. The received pilot signal at BS~$l$ is 
\begin{equation} \label{eq:received-pilot-impairments}
\mathbf{Y}_l  = \sum_{(i,t) \in \mathcal{S}} \mathbf{h}_{i,t}^{l} \left( \sqrt{1 - \epsilon^2} \pmb{\psi}_{i,t}^{H} + \pmb{\varepsilon}_{p,i,t}^H \right) +  \mathbf{N}_l.
\end{equation}
 Similar to \cite{Zhang2012a}, we assume that the UL distortion term of user~$t$ in cell~$i$ is distributed as
\begin{equation}
 \pmb{\varepsilon}_{p,i,t} \sim \mathcal{CN} \left(\mathbf{0}, \pmb{\Lambda}_{i,t}  \right),
\end{equation}
where $\pmb{\Lambda}_{i,t} = \epsilon^2 \mathrm{diag} ( \hat{p}_{i,t}^1, \ldots, \hat{p}_{i,t}^{\tau_p})$. Since the distortion term is multiplied with the unknown channel, just as the pilot signals, the channel estimation is more complicated in this case. An estimate $\hat{\mathbf{h}}_{l,k}$ of $\mathbf{h}_{l,k}$ channel can be obtained from
\begin{equation}
\begin{split}
& \mathbf{y}_{l,k} = \mathbf{Y}_{l} \pmb{\psi}_{l,k}  = \\
& \sum_{(i,t) \in \mathcal{S}} \mathbf{h}_{i,t}^{l} \left( \sqrt{1 - \epsilon^2 } \pmb{\psi}_{i,t}^H \pmb{\psi}_{l,k} + \pmb{\varepsilon}_{p,i,t}^H \pmb{\psi}_{l,k} \right) + \mathbf{N}_{l} \pmb{\psi}_{l,k},
\end{split}
\end{equation}
where the received signal in \eqref{eq:received-pilot-impairments} is correlated with the pilot sequence used by the user of interest.
We note that the MMSE estimator is intractable, but we can derive the linear minimum mean square error (LMMSE) estimator as shown in Lemma~\ref{Lemma: ChannelEstimate}.
\begin{lemma} \label{Lemma: ChannelEstimate}
Under hardware impairments, the LMMSE channel estimate of $\mathbf{h}_{l,k}^l$ at BS~$l$ is
\begin{equation}
\hat{\mathbf{h}}_{l,k}^l = \frac{\sqrt{1 - \epsilon^2 } \beta_{l,k}^l \sum\limits_{b=1}^{\tau_p} \hat{p}_{l,k}^b }{ \sum\limits_{(i,t) \in \mathcal{S}} \beta_{i,t}^l \kappa_{i,t} + \sigma^2 \sum\limits_{b=1}^{\tau_p} \hat{p}_{l,k}^b } \mathbf{y}_{l,k},
\end{equation}
where 
\begin{equation} \label{eq: Kappa_it}
\kappa_{i,t} = (1 - \epsilon^2)\left( \sum_{b=1}^{\tau_p}  \sqrt{\hat{p}_{i,t}^b \hat{p}_{l,k}^b }\right)^2 + \epsilon^2\sum_{b=1}^{\tau_p}   \hat{p}_{i,t}^b \hat{p}_{l,k}^{b}.
\end{equation}
The channel estimate $\hat{\mathbf{h}}_{l,k}^l$ and estimation error $\pmb{e}_{l,k}^{l}$ are uncorrelated, but not independent, have zero mean, and the covariance matrices as
\begin{equation}
\mathrm{Cov} \{ \hat{\mathbf{h}}_{l,k}^l, \hat{\mathbf{h}}_{l,k}^l \} = \frac{(1 - \epsilon^2) (\beta_{l,k}^l)^2 \left(\sum\limits_{b=1}^{\tau_p} \hat{p}_{l,k}^b \right)^2}{\sum\limits_{(i,t) \in \mathcal{S}} \beta_{i,t}^l \kappa_{i,t} + \sigma^2 \sum\limits_{b=1}^{\tau_p} \hat{p}_{l,k}^b } \mathbf{I}_M ,
\end{equation}
\begin{equation}
\mathrm{Cov} \{ \mathbf{e}_{l,k}^l, \mathbf{e}_{l,k}^l\} = \frac{\sum\limits_{(i,t) \in \mathcal{S} \setminus (l,k)} \beta_{i,t}^l \kappa_{i,t} + \sigma^2 \sum\limits_{b=1}^{\tau_p} \hat{p}_{l,k}^b + \epsilon^2\sum\limits_{b=1}^{\tau_p} (\hat{p}_{l,k}^{b} )^2}{\sum\limits_{(i,t) \in \mathcal{S}} \beta_{i,t}^l \kappa_{i,t} + \sigma^2 \sum\limits_{b=1}^{\tau_p} \hat{p}_{l,k}^b } \mathbf{I}_M.
\end{equation}
\end{lemma}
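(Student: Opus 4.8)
The plan is to derive $\hat{\mathbf{h}}_{l,k}^l$ directly from the standard LMMSE formula $\hat{\mathbf{h}}_{l,k}^l = \mathbb{E}\{\mathbf{h}_{l,k}^l \mathbf{y}_{l,k}^H\}\left(\mathbb{E}\{\mathbf{y}_{l,k}\mathbf{y}_{l,k}^H\}\right)^{-1}\mathbf{y}_{l,k}$ and then read off the two covariances. First I would record why the MMSE estimator is unavailable: the observation $\mathbf{y}_{l,k}$ contains products $\mathbf{h}_{i,t}^l\,\pmb{\varepsilon}_{p,i,t}^H\pmb{\psi}_{l,k}$ of two independent complex Gaussians, so $(\mathbf{h}_{l,k}^l,\mathbf{y}_{l,k})$ is not jointly Gaussian. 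This is exactly what forces us to the LMMSE estimator and what later makes the estimate and error \emph{uncorrelated but not independent}, since uncorrelatedness implies independence only in the jointly Gaussian case.

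The first computation is the cross-covariance $R_{hy} = \mathbb{E}\{\mathbf{h}_{l,k}^l \mathbf{y}_{l,k}^H\}$. Using that the channels $\{\mathbf{h}_{i,t}^l\}$ are zero-mean and independent across $(i,t)$, that $\pmb{\varepsilon}_{p,i,t}$ is zero-mean and independent of the channels, and that the noise $\mathbf{N}_l$ is independent of everything, only the $(i,t)=(l,k)$ channel term survives while the distortion and noise contributions vanish in the mean. Together with $\pmb{\psi}_{l,k}^H\pmb{\psi}_{l,k}=\sum_{b=1}^{\tau_p}\hat{p}_{l,k}^b$ this gives $R_{hy} = \sqrt{1-\epsilon^2}\,\beta_{l,k}^l\left(\sum_{b=1}^{\tau_p}\hat{p}_{l,k}^b\right)\mathbf{I}_M$.

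The central step is the observation covariance $R_{yy}=\mathbb{E}\{\mathbf{y}_{l,k}\mathbf{y}_{l,k}^H\}$. Writing $\mathbf{y}_{l,k}=\sum_{(i,t)}\mathbf{h}_{i,t}^l c_{i,t} + \mathbf{N}_l\pmb{\psi}_{l,k}$ with the scalar $c_{i,t}=\sqrt{1-\epsilon^2}\,\pmb{\psi}_{i,t}^H\pmb{\psi}_{l,k}+\pmb{\varepsilon}_{p,i,t}^H\pmb{\psi}_{l,k}$, independence across $(i,t)$ and of the noise eliminates all cross terms, leaving $R_{yy}=\sum_{(i,t)}\beta_{i,t}^l\,\mathbb{E}\{|c_{i,t}|^2\}\mathbf{I}_M + \sigma^2\|\pmb{\psi}_{l,k}\|^2\mathbf{I}_M$. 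The delicate calculation is $\mathbb{E}\{|c_{i,t}|^2\}$: the cross term vanishes by zero mean of $\pmb{\varepsilon}_{p,i,t}$, the deterministic part yields $(1-\epsilon^2)\left(\sum_{b=1}^{\tau_p}\sqrt{\hat{p}_{i,t}^b\hat{p}_{l,k}^b}\right)^2$ via \eqref{eq: Orthogonal_Property}, and the distortion part yields $\mathbb{E}\{|\pmb{\varepsilon}_{p,i,t}^H\pmb{\psi}_{l,k}|^2\}=\pmb{\psi}_{l,k}^H\pmb{\Lambda}_{i,t}\pmb{\psi}_{l,k}=\epsilon^2\sum_{b=1}^{\tau_p}\hat{p}_{i,t}^b\hat{p}_{l,k}^b$, because the $b$th entry of $\pmb{\psi}_{l,k}$ has squared magnitude $\hat{p}_{l,k}^b$. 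Their sum reproduces precisely $\kappa_{i,t}$ of \eqref{eq: Kappa_it}, and $\sigma^2\|\pmb{\psi}_{l,k}\|^2=\sigma^2\sum_{b=1}^{\tau_p}\hat{p}_{l,k}^b$. Hence $R_{yy}$ is a scalar multiple of $\mathbf{I}_M$, its inverse is immediate, and the estimator collapses to the claimed scalar gain times $\mathbf{y}_{l,k}$.

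Finally, with $R_{yy}=D\,\mathbf{I}_M$ and $D=\sum_{(i,t)}\beta_{i,t}^l\kappa_{i,t}+\sigma^2\sum_{b=1}^{\tau_p}\hat{p}_{l,k}^b$, I would obtain the estimate covariance from the scalar-times-identity product $\mathrm{Cov}\{\hat{\mathbf{h}}_{l,k}^l\}=R_{hy}R_{yy}^{-1}R_{hy}^H$, and the error covariance from the orthogonality property $\mathrm{Cov}\{\mathbf{e}_{l,k}^l\}=\beta_{l,k}^l\mathbf{I}_M-\mathrm{Cov}\{\hat{\mathbf{h}}_{l,k}^l\}$, substituting $\kappa_{l,k}=(1-\epsilon^2)\left(\sum_{b=1}^{\tau_p}\hat{p}_{l,k}^b\right)^2+\epsilon^2\sum_{b=1}^{\tau_p}(\hat{p}_{l,k}^b)^2$ into $D$ to collapse the $(l,k)$ term. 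The main obstacle I anticipate is the second-moment bookkeeping in $\mathbb{E}\{|c_{i,t}|^2\}$ — cleanly separating the deterministic pilot-overlap part from the random distortion part and evaluating the quadratic form $\pmb{\psi}_{l,k}^H\pmb{\Lambda}_{i,t}\pmb{\psi}_{l,k}$ — together with arguing rigorously that the estimate and error are merely uncorrelated, which hinges on the non-Gaussianity of $\mathbf{y}_{l,k}$ introduced by the channel–distortion products.
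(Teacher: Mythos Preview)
Your proposal is correct and follows exactly the approach the paper intends: the paper's proof is the single line ``The proof follows the standard LMMSE estimation technique as described in \cite{Kay1993a},'' and you have simply written out those standard computations of $R_{hy}$ and $R_{yy}$ explicitly. Your identification of $\mathbb{E}\{|c_{i,t}|^2\}=\kappa_{i,t}$ via the quadratic form $\pmb{\psi}_{l,k}^H\pmb{\Lambda}_{i,t}\pmb{\psi}_{l,k}$, and your use of the non-Gaussianity of the channel--distortion products to justify ``uncorrelated but not independent,'' are precisely the right ingredients.
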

\begin{proof}
The proof follows the standard LMMSE estimation technique as described in \cite{Kay1993a}.
\end{proof}
This lemma shows that the variance of the estimation error grows with the impairment level. In the special case of $\epsilon = 0$, we obtain ideal hardware and the estimate is the same as in Lemma~\ref{lemma: Distribution}. In contrast, the estimate is zero (equal to the mean value) if $\epsilon = 1$. We now use the statistics in Lemma~\ref{Lemma: ChannelEstimate} to derive a lower bound on the UL achievable SE and formulate the corresponding optimization problems.

\subsection{UL Data Transmission and Max-min Fairness Optimization under Hardware Impairments}
Similar to \cite{Bjornson2014a}, the received signal at BS~$l$ during data transmission for the case of hardware impairments at the users is modeled as
\begin{equation}
\mathbf{y}_l = \sum_{(i,t) \in \mathcal{S}} \mathbf{h}_{i,t}^l \left( \sqrt{(1 - \epsilon^2) p_{i,t}} x_{i,t} + \varepsilon_{i,t} \right) + \mathbf{n}_l.
\end{equation}
We assume that the distortion caused by user~$t$ in cell~$i$ is worst-case Gaussian distributed with $\varepsilon_{i,t} \sim \mathcal{CN}(0, \epsilon^2  p_{i,t} )$. A lower bound on the UL ergodic capacity of user~$k$ in cell~$l$ is obtained in Theorem \ref{Theorem: ImperfectHWRate}.
\begin{theorem} \label{Theorem: ImperfectHWRate}
Under hardware impairments, if the system uses the pilot structure in \eqref{eq: PilotStructure1} and MR detection, the SE in \eqref{eq:RateProposedPilot} for user~$k$ in cell~$l$ becomes
\begin{equation} \label{Rate:Hardware}
R_{l,k} = \left( 1 - \frac{\tau_p}{\tau_c} \right) \log_2 \left(1 + \overline{\mathrm{SINR}}_{l,k}^{\mathrm{MR}} \right),
\end{equation}
where the effective SINR value, $\overline{\mathrm{SINR}}_{l,k}^{\mathrm{MR}}$, is 
\begin{equation} \label{eq: SINR_MRCimpairment}
\frac{M(1-\epsilon^2)^2 p_{l,k} (\beta_{l,k}^l)^2 \left(\sum\limits_{b=1}^{\tau_p} \hat{p}_{l,k}^b \right)^2 }{ \left( \sum\limits_{(i,t) \in \mathcal{S}}  \beta_{i,t}^l \kappa_{i,t}  + \sigma^2 \sum\limits_{b=1}^{\tau_p} \hat{p}_{l,k}^b \right)\left( \sum\limits_{(i,t) \in \mathcal{S}} p_{i,t} \beta_{i,t}^l + \sigma^2 \right) + \eta_{l,k} },
\end{equation}
where
\begin{equation}
\begin{split}
 \eta_{l,k}  =& M \sum_{(i,t) \in \mathcal{S} \setminus (l,k)} \kappa_{i,t} p_{i,t} (\beta_{i,t}^l)^2 + M \epsilon^2 p_{l,k} (\beta_{l,k}^l)^2  \sum_{b=1}^{\tau_p} (\hat{p}_{l,k}^b)^2\\
 & + M \epsilon^2 (1- \epsilon^2)  \left( \sum_{b=1}^{\tau_p} \hat{p}_{l,k}^b \right)^2 p_{l,k} (\beta_{l,k}^l)^2.
 \end{split}
\end{equation}
\end{theorem}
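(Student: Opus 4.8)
The plan is to follow the same use-and-then-forget (UatF) template that underlies Lemma~\ref{Lemma: Achievable_Rate}, but to carry the hardware-distortion terms through every moment computation. First I would substitute the MR detector $\mathbf{v}_{l,k} = \hat{\mathbf{h}}_{l,k}^l$, with $\hat{\mathbf{h}}_{l,k}^l$ now the LMMSE estimate of Lemma~\ref{Lemma: ChannelEstimate}, into the post-combining signal $\mathbf{v}_{l,k}^H \mathbf{y}_l$, where $\mathbf{y}_l = \sum_{(i,t)} \mathbf{h}_{i,t}^l(\sqrt{(1-\epsilon^2)p_{i,t}}\,x_{i,t} + \varepsilon_{i,t}) + \mathbf{n}_l$. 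I would isolate the deterministic desired term $\sqrt{(1-\epsilon^2)p_{l,k}}\,\mathbb{E}\{\mathbf{v}_{l,k}^H \mathbf{h}_{l,k}^l\}\,x_{l,k}$ and treat the remainder as effective noise. The generic bound \eqref{eq: SINR_k} then takes a modified form with numerator $(1-\epsilon^2)p_{l,k}|\mathbb{E}\{\mathbf{v}_{l,k}^H \mathbf{h}_{l,k}^l\}|^2$ and a denominator gathering (i) the beamforming-gain fluctuation $(1-\epsilon^2)p_{l,k}(\mathbb{E}\{|\mathbf{v}_{l,k}^H\mathbf{h}_{l,k}^l|^2\}-|\mathbb{E}\{\mathbf{v}_{l,k}^H\mathbf{h}_{l,k}^l\}|^2)$, (ii) the inter-user terms $\sum_{(i,t)\neq(l,k)} p_{i,t}\mathbb{E}\{|\mathbf{v}_{l,k}^H\mathbf{h}_{i,t}^l|^2\}$, in which the data and distortion powers $(1-\epsilon^2)p_{i,t}$ and $\epsilon^2 p_{i,t}$ recombine into $p_{i,t}$ because $\varepsilon_{i,t}$ is independent of the channels and of the estimate, (iii) the self-distortion $\epsilon^2 p_{l,k}\mathbb{E}\{|\mathbf{v}_{l,k}^H\mathbf{h}_{l,k}^l|^2\}$, and (iv) the noise $\sigma^2\mathbb{E}\{\|\mathbf{v}_{l,k}\|^2\}$.

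The second step is to evaluate these moments. The first- and second-order quantities $\mathbb{E}\{\mathbf{v}_{l,k}^H\mathbf{h}_{l,k}^l\}$ and $\mathbb{E}\{\|\mathbf{v}_{l,k}\|^2\}$ follow immediately from the estimate covariance in Lemma~\ref{Lemma: ChannelEstimate}, since both equal $M$ times the per-antenna estimate variance; they generate the factor $M(1-\epsilon^2)^2 p_{l,k}(\beta_{l,k}^l)^2(\sum_b \hat{p}_{l,k}^b)^2$ in the numerator and the leading denominator factor $(\sum_{(i,t)}\beta_{i,t}^l\kappa_{i,t}+\sigma^2\sum_b\hat{p}_{l,k}^b)$ once the common estimate-variance normalization is cleared. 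The non-coherent part of $\mathbb{E}\{|\mathbf{v}_{l,k}^H\mathbf{h}_{i,t}^l|^2\}$, arising when $\mathbf{h}_{i,t}^l$ is uncorrelated with the estimate, yields $\beta_{i,t}^l\mathbb{E}\{\|\mathbf{v}_{l,k}\|^2\}$ and assembles into the second denominator factor $(\sum_{(i,t)}p_{i,t}\beta_{i,t}^l+\sigma^2)$.

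The crux, and the step I expect to be the main obstacle, is the coherent fourth-order moments that feed $\eta_{l,k}$. Unlike the ideal case of Lemma~\ref{lemma: Distribution}, the estimate and estimation error are here only uncorrelated, not independent, so the expectations cannot be split by independence. Instead I would expand $\hat{\mathbf{h}}_{l,k}^l$ as the explicit scaled combination of $\mathbf{y}_{l,k}$ given in Lemma~\ref{Lemma: ChannelEstimate}, rewrite $\mathbf{y}_{l,k}$ in terms of all channels, pilot distortions $\pmb{\varepsilon}_{p,i,t}$, and pilot noise, and compute the required fourth moments directly from the Gaussianity of these jointly independent quantities. Careful bookkeeping of the pilot correlations $\pmb{\psi}_{i,t}^H\pmb{\psi}_{l,k}$ together with the distortion covariance $\pmb{\Lambda}_{i,t}=\epsilon^2\mathrm{diag}(\hat{p}_{i,t}^1,\ldots,\hat{p}_{i,t}^{\tau_p})$ is precisely what fuses the coherent pilot-overlap energy and the distortion energy into the combined quantity $\kappa_{i,t}$, so the coherent interference from the other users collapses to $M\sum_{(i,t)\neq(l,k)}\kappa_{i,t}p_{i,t}(\beta_{i,t}^l)^2$. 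The two remaining terms of $\eta_{l,k}$ are genuinely new features of the impaired system: they capture the desired user's own hardware distortion, which — unlike distant pilot contamination — stays coherent with $\hat{\mathbf{h}}_{l,k}^l$ and therefore scales with $M$ rather than averaging out. Tracking the data-phase self-distortion $\varepsilon_{l,k}$ through the MR combiner produces the $\epsilon^2$ contribution $M\epsilon^2 p_{l,k}(\beta_{l,k}^l)^2\sum_b(\hat{p}_{l,k}^b)^2$, while the pilot-phase distortion $\pmb{\varepsilon}_{p,l,k}$ entering the estimate produces the $\epsilon^2(1-\epsilon^2)$ contribution $M\epsilon^2(1-\epsilon^2)(\sum_b\hat{p}_{l,k}^b)^2 p_{l,k}(\beta_{l,k}^l)^2$.

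Finally I would collect the numerator and the three denominator groups, divide through by the common estimate-variance normalization exactly as in the proof of Lemma~\ref{Lemma: Achievable_Rate}, and verify that the result matches \eqref{eq: SINR_MRCimpairment}. Setting $\epsilon=0$ should reduce $\kappa_{i,t}$ to $(\sum_b\sqrt{\hat{p}_{i,t}^b\hat{p}_{l,k}^b})^2$, annihilate both self-distortion terms, and collapse $\eta_{l,k}$ to the coherent-interference term of \eqref{eq: SINR_MRC1}, providing a consistency check against the ideal-hardware result.
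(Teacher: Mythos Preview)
Your proposal is correct and follows essentially the same route as the paper: the paper also works directly with $\mathbf{y}_{l,k}$ (equivalently, with $\hat{\mathbf{h}}_{l,k}^l$ up to the scalar LMMSE factor that cancels), computes $\mathbb{E}\{\mathbf{y}_{l,k}^H\mathbf{h}_{l,k}^l\}$, $\mathbb{E}\{|\mathbf{y}_{l,k}^H\mathbf{h}_{i,t}^l|^2\}$, and $\mathbb{E}\{\|\mathbf{y}_{l,k}\|^2\}$ from the joint Gaussianity of the channels, pilot distortions, and noise, and observes that the pilot-overlap and pilot-distortion energies fuse into $\kappa_{i,t}$ before collecting terms into \eqref{eq: SINR_MRCimpairment}. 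Your intuitive attribution of the two self-interference pieces of $\eta_{l,k}$ to ``data-phase'' versus ``pilot-phase'' distortion is a bit loose (the algebra mixes them), but since you plan to obtain both terms by direct moment computation this does not affect the argument.
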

\begin{proof}
The proof is given in Appendix \ref{Appendix: Proof-Theorem:ImperfectHWRate}.
\end{proof}
In comparison to having ideal hardware, the hardware impairments reduce the coherent gain in the numerator of the SINR by a factor $(1- \epsilon^2)^2$. There is now coherent self-interference from user~$k$, which behave similarly to pilot contamination. This contamination can be relatively large since the user generally has a stronger channel than the pilot-contaminating interferers. In the asymptotic limit when $M \to \infty$, only the signal term in the numerator and the term $ \eta_{l,k}$ with self-interference and coherent interference from pilot contamination remain.

 Based on the SE expression with hardware impairments, we consider the max-min fairness optimization problem
\begin{equation} \label{eq: Opt_ProbImparments}
\begin{aligned}
& \underset{ \xi, \{p_{l,k}, \hat{p}_{l,k}^b \geq 0 \}}{ \mathrm{maximize} }  && \xi \\
& \text{subject to} &&  \overline{\mathrm{SINR}}_{l,k}^{\textrm{MR}}  \geq \xi, \forall l,k, \\
&&& \frac{1}{\tau_p} \sum_{b=1}^{\tau_p} \hat{p}_{l,k}^b \leq P_{\max, l,k}, \forall l,k,\\
&&&  p_{l,k} \leq P_{\max, l,k}^d, \forall l,k.\\
\end{aligned}
\end{equation}
We stress that \eqref{eq: Opt_ProbImparments} is a generalization of \eqref{eq: Opt_Prob2}, but all algorithms we proposed for the case of ideal hardware can be readily extended. Similar to \eqref{eq: Opt_Prob2}, by utilizing the successive approximation method similar to Algorithm~\ref{Algorithm2}, a local optimum to \eqref{eq: Opt_ProbImparments} is obtained in polynomial time. The performance with heuristic pilot designs and power allocation is obtained directly from Theorem 2 by using the corresponding values on $\hat{p}^b_{l,k}$ and for example, in the case of using the combinatorial pilot structure in \eqref{eq: PilotStructure3}, similar to the procedures in Algorithm~\ref{Algorithm1}, the global max-min SE solution under hardware impairments is obtained by jointly optimizing pilot and data powers with exhaustive search over all $\mathbf{A} \in \mathcal{D}$.

\section{Generalization to Correlated Rayleigh fading} \label{Section:CorrUncorr}
Since the propagation channels may be spatially correlated in practice, we now consider a correlation model where the channel between user $t$ in cell $i$ and BS $l$ is modeled as
\begin{equation}
\mathbf{h}_{i,t}^l \sim \mathcal{CN} \left( \mathbf{0}, \mathbf{R}_{i,t}^l \right),
\end{equation}
where $\mathbf{R}_{i,t}^l \in \mathbb{C}^{M \times M}$ is the covariance matrix with equal diagonal elements denoted by $\beta_{i,t}^l$ since the BS antennas are co-located. This assumption also leads to convex optimization problems. Meanwhile the non-zero off-diagonal elements represent the spatial correlation. By using the pilot transmission model in \eqref{eq: Received_Pilot} and element-wise MMSE estimation \cite{Victor2017a, Biguesh2006a}, the channel estimate of $\mathbf{h}_{l,k}^l$ is
\begin{equation}
	\hat{ \mathbf{h} }_{l,k}^l= \varrho_{l,k}^l \mathbf{y}_{l,k},
\end{equation} 
where
\begin{equation} \label{eq: varrho_blk}
\varrho_{l,k}^l = \frac{\beta_{l,k}^l \sum\limits_{b=1}^{\tau_p} \hat{p}_{l,k}^b }{ \sum\limits_{(i,t) \in \mathcal{S}} \beta_{i,t}^l \left( \sum\limits_{b=1}^{\tau_p}  \sqrt{\hat{p}_{i,t}^b \hat{p}_{l,k}^b}  \right)^2 + \sigma^2  \sum\limits_{b=1}^{\tau_p} \hat{p}_{l,k}^b  }.
\end{equation}
After that, a closed-form expression of the achievable SE is obtained by computing the moments of Gaussian distributions in \eqref{eq: SINR_k}:
\begin{equation} \label{eq: Rat_Corr}
R_{l,k}^{\mathrm{MR}} =  \left( 1 - \frac{\tau_p}{\tau_c} \right) \log_2 \left(1 + \widehat{\mathrm{SINR}}_{l,k}^{\mathrm{MR}} \right), 
\end{equation} 
where the effective SINR, denoted by $\widehat{\mathrm{SINR}}_{l,k}^{\mathrm{MR}}$, is 
\begin{equation} \label{eq:CorrelatedSINR}
\widehat{\mathrm{SINR}}_{l,k}^{\mathrm{MR}} = \frac{M (\beta_{l,k}^l)^2 p_{l,k} \left( \sum_{b=1}^{\tau_p} \hat{p}_{l,k}^b \right)^2}{ \mathcal{I}_{\textrm{non}} + \mathcal{I}_{\textrm{coh}}}.
\end{equation}
By denoting the trace of a matrix as $\mathrm{tr}(\cdot)$, the non-coherent interference and thermal noise $\mathcal{I}_{\textrm{non}}$, and the coherent interference $\mathcal{I}_{\textrm{coh}}$ in \eqref{eq:CorrelatedSINR} are respectively given as
\begin{align*}
\mathcal{I}_{\textrm{non}}&= \sum\limits_{(i,t)  \in \mathcal{S}}\frac{p_{i,t}}{M}\sum\limits_{(i',t')  \in \mathcal{S}} \mathrm{tr}\left( \mathbf{R}_{i,t}^l \mathbf{R}_{i',t'}^l  \right) \left( \sum_{b=1}^{\tau_p} \sqrt{\hat{p}_{i',t'}^b \hat{p}_{l,k}^b}  \right)^2\\
& \; \; \; +  \sigma^2 \sum_{b=1}^{\tau_p} \hat{p}_{l,k}^b  \sum\limits_{(i,t)  \in \mathcal{S}}p_{i,t}\beta_{i,t}^l
 + \frac{\beta_{l,k}^l \sigma^2 \sum_{b=1}^{\tau_p} \hat{p}_{l,k}^b}{\varrho_{l,k}^l} \numberthis , \\
\mathcal{I}_{\textrm{coh}} &= M\sum\limits_{(i,t) \in \mathcal{S}\setminus (l,k)} p_{i,t} (\beta_{i,t}^l)^2   \left( \sum_{b=1}^{\tau_p} \sqrt{\hat{p}_{i,t}^b \hat{p}_{l,k}^b}  \right)^2. \numberthis
\end{align*}
From the SINR expression in \eqref{eq:CorrelatedSINR}, the correlation between channels only effects the non-coherent interference while the coherent interference remains the same as in the case of uncorrelated Rayleigh fading. We now can formulate the max-min fairness optimization problem as
\begin{equation} \label{eq: Opt_CorrProb}
\begin{aligned}
& \underset{ \xi, \{p_{l,k}, \hat{p}_{l,k}^b \geq 0 \}}{ \mathrm{maximize} }  && \xi \\
& \text{subject to} &&  \widehat{\mathrm{SINR}}_{l,k}^{\textrm{MR}}  \geq \xi, \forall l,k, \\
&&& \frac{1}{\tau_p} \sum_{b=1}^{\tau_p} \hat{p}_{l,k}^b \leq P_{\max, l,k}, \forall l,k,\\
&&&  p_{l,k} \leq P_{\max, l,k}^d, \forall l,k.\\
\end{aligned}
\end{equation}
The optimization problem \eqref{eq: Opt_CorrProb} has the same general structure as the problems considered in Section~\ref{Section: OptProblem}. For example, with the proposed pilot design, we can directly use Algorithm~\ref{Algorithm2} to obtain a local optimum, while the global solution of the max-min fairness optimization can obtained with the pilot design in \eqref{eq: PilotStructure3} by making simple modifications to Algorithm~\ref{Algorithm1}. 
\section{Numerical Results} \label{Section: Experimental Result}
In this section, we use numerical simulations to quantify and discuss the effectiveness of the proposed pilot designs, using the exact closed-form expressions of the SE in \eqref{eq: RateMRC1}, \eqref{eq: Rate2}, \eqref{Rate:Hardware}, and \eqref{eq: Rat_Corr}. A Massive MIMO system with a coverage area $1$ km$^2$ comprising of four square cells is considered. In each cell, a BS is located at the center, while the $K$ users are uniformly distributed at distance not closer to the BS than $35$ m. To even out interference, the coverage area is wrapped around, and therefore one BS has eight neighbors.  The coherence interval contains $200$ symbols and the system bandwidth is $20$ MHz. The noise variance is $-96$ dBm. We deploy the 3GPP LTE model from \cite{LTE2010b} where the large-scale fading coefficient $\beta_{i,t}^l $ [dB] is
\begin{equation}
\beta_{i,t}^l = -148.1 - 37.6 \log_{10} d_{i,t}^l + z_{i,t}^l,
\end{equation}
where, $d_{i,t}^l$ denotes the distance in km between user~$t$ in cell~$i$ and BS~$l$. The shadow fading $z_{i,t}^l$ has a Gaussian distribution with zero mean and the standard derivation $7$~dB.\footnote{Shadow fading realizations were sometimes regenerated to ensure that the home BS has the largest large-scale fading to its users (i.e., $\beta_{l,k}^l$ is the maximum over all $\beta_{i,k}^l, i = 1, \ldots, L$). This makes sure that the coverage area of each BS is a square also with shadow fading, while retaining the macro-diversity towards shadow fading that exists in practice.} The maximum pilot and data power constraints are  $P_{\max,l,k}= P_{\max, l,k}^d  = 200$ mW, $\forall l,k$.

For Algorithm~\ref{Algorithm2} and its modification with fixed data power (which was considered in \cite{Chien2017b}), we observe better performance with a hierarchical initialization of $p_{l,k}^{b,(0)}$ than with an all-equal initialization. Consequently, we initialize $p_{l,k}^{b,(0)}$ as uniformly distributed over the range $[0; P_{\max,l,k}]$.\footnote{The maximum power settings indicate relatively low median SNRs at the cell-edge users of the network.} Algorithm~\ref{Algorithm2} converges quite fast, so the stopping criteria was specified in number of iterations (e.g., $15$ iterations). The proposed algorithms are compared with related works and exhaustive search:
\begin{figure}[t]
	\centering
	\includegraphics[trim=1.3cm 8cm 1.8cm 8.5cm, clip=true, width=3.3in]{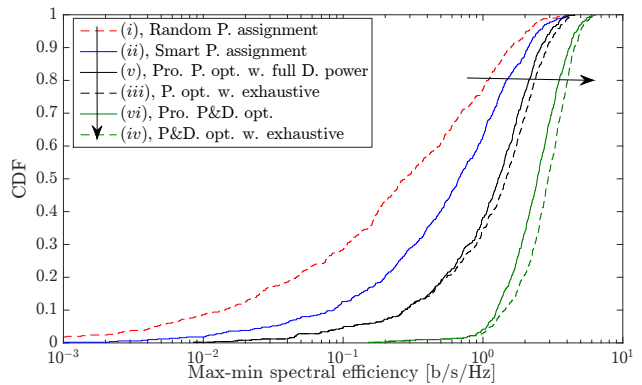} 
	\caption{ Cumulative distribution function (CDF) versus the max-min SE [b/s/Hz] with $M = 300, K = \tau_p = 2$.}
	\label{Fig-CDF-2K2B}
\end{figure}
\begin{figure}[t]
	\centering
	\includegraphics[trim=1.3cm 8cm 1.8cm 8.5cm, clip=true, width=3.3in]{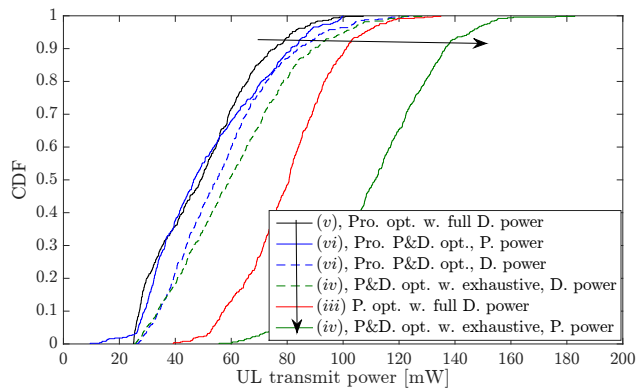} 
	\caption{Cumulative distribution function (CDF) versus the UL transmit power [mW] with $M= 300,  K = \tau_p =2$.}
	\label{Fig-CDF-Power-2K2B}
\end{figure}

\begin{itemize}
\item[\textit{(i)}] \textit{Random pilot assignment}, as considered in \cite{Jose2011b,Bjornson2016a}. The same pilots are reused in every cell and assigned randomly to the users within the cell. Equal pilot and data powers of $ 200$ mW are used by all users. This method is denoted as \textit{Random P. assignment} in the figures.
\item[\textit{(ii)}] \textit{Smart pilot assignment}, as proposed in \cite{Xu2015a}. Orthogonal pilots are assumed within a cell and reused in every cell. They are assigned to the users based on the mutual interference information, determined by the large-scale fading coefficients. Equal pilot and data powers $ 200$ mW are used by all users. This method is denoted as \textit{Smart P. assignment} in the figures.
\item[\textit{(iii)}] \textit{Pilot power optimization with exhaustive search}, as proposed in \cite{Chien2017b}, utilizes the pilot structure in \eqref{eq: PilotStructure3} and the optimal solution to the max-min SE is obtained by a modification of Algorithm~\ref{Algorithm1} with fixed data powers of $p_{l,k} = 200$ mW at all users. This method is denoted as \textit{P. opt. w. exhaustive} in the figures.
\item[\textit{(iv)}] \textit{Joint pilot and data power optimization with exhaustive search} utilizes the pilot structure in \eqref{eq: PilotStructure3} and the optimal solution to the max-min SE is obtained as shown in Algorithm~\ref{Algorithm1}. This method is denoted as \textit{P\&D opt. w. exhaustive} in the figures.
\item[\textit{(v)}] \textit{Proposed pilot optimization with full data power}, as presented in \cite{Chien2017b},  utilizes the pilot structure in \eqref{eq: PilotStructure1} and the local optimum to the max-min SE is obtained by a modification of Algorithm~\ref{Algorithm2} with fixed data powers of $p_{l,k} = 200$ mW at all users. This method is denoted as \textit{Pro. P. opt. w. full D. power} in the figures.
\item[\textit{(vi)}] \textit{Proposed joint pilot and data power optimization} utilizes the pilot structure in \eqref{eq: PilotStructure1} and the local optimum to the max-min SE is obtained as shown in Algorithm~\ref{Algorithm2}. This method is denoted as \textit{Pro. P\&D opt.} in the figures.
\end{itemize}
The SE is measured over different random user locations and shadow fading realizations. The SE achieved by $(i)$--$(iv)$ are also averaged over different pilot reuse locations, while for $(v)$ and $(vi)$ the SE is also averaged over different initializations of $\hat{p}_{l,k}^{b, (0)},\forall l,k,b$. The solutions to the optimization problems are obtained by utilizing the MOSEK solver \cite{Mosek} in CVX \cite{cvx2015}.
\begin{figure}[t]
	\centering
	\includegraphics[trim=1.3cm 8cm 1.8cm 8.5cm, clip=true, width=3.3in]{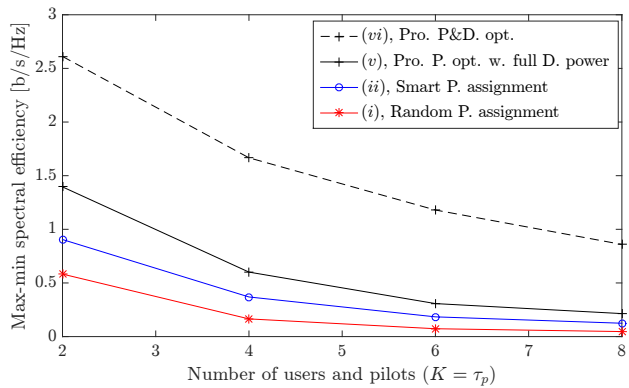} 
	\caption{ Max-min SE [b/s/Hz] versus the number of users and pilots, $M = 300$.}
	\label{Fig-VariousK}
\end{figure}
\begin{figure}[t]
	\centering
	\includegraphics[trim=1.3cm 8cm 1.8cm 8.5cm, clip=true, width=3.3in]{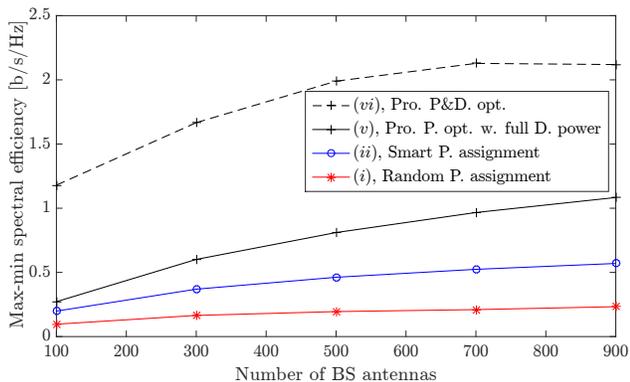} 
	\caption{ Max-min SE [b/s/Hz] versus the number of BS antennas, $ K = \tau_p = 4 $.}
	\label{Fig-VariousM}
\end{figure}
\begin{figure}[t]
	\centering
	\includegraphics[trim=1.3cm 8cm 1.8cm 8.5cm, clip=true, width=3.3in]{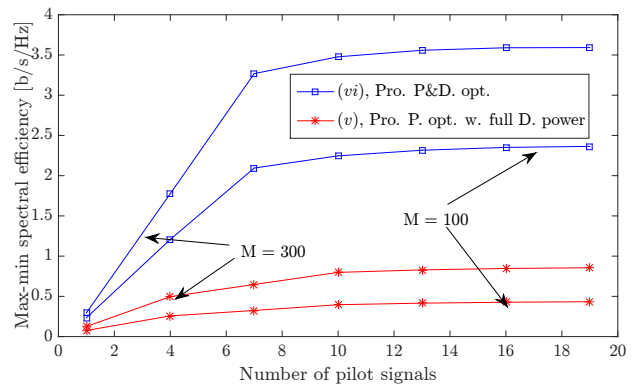} 
	\caption{ Max-min SE [b/s/Hz] versus the length of pilot signals, $K=4$.}
	\label{Fig-VariousB}
\end{figure}
\begin{figure}[t]
	\centering
	\includegraphics[trim=1.3cm 8cm 1.8cm 8.5cm, clip=true, width=3.3in]{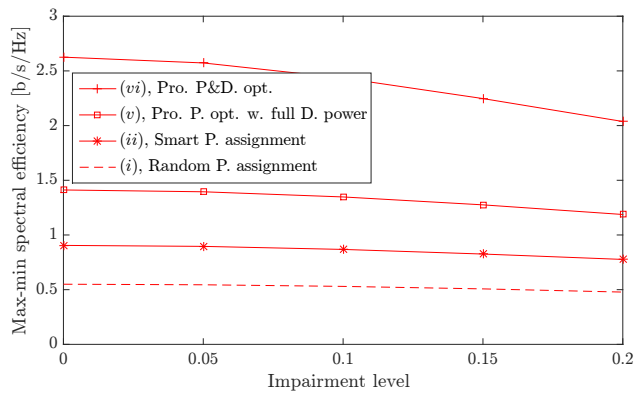} 
	\caption{ Max-min SE [b/s/Hz] versus the impairment level, $M = 300, K = \tau_p = 2$.}
	\label{Fig-HardwareImpairments}
\end{figure} 
Fig.~\ref{Fig-CDF-2K2B} shows the cumulative distribution function (CDF) of the max-min SE level [b/s/Hz] for the case $K=\tau_p=2$ and $M =300$. Random pilot assignment yields the worst performance due to the pilot contamination and mutual interference. At the $95\%$-likely point, smart pilot assignment brings significant improvements: it is about $6.86 \times$ better since it attempts to reduce the mutual interference between the users \cite{Xu2015a}. Although the performance of smart pilot assignment is very close to optimal pilot assignment with exhaustive search for a predetermined pilot power level \cite{Xu2015a}, by jointly optimizing power and allocation of the pilot signals, the proposed method outperforms smart pilot assignment by providing an additional $1.53 \times$ gain in average max-min SE. Fig.~\ref{Fig-CDF-2K2B} also demonstrates the superiority of jointly optimizing both data and pilot powers. It yields $1.86 \times$ higher average SE than optimizing only the pilots. Furthermore, the similar performance between the proposed pilot design and exhaustive search confirms the effectiveness of the proposed algorithms for finding locally optimal solutions. The proposed approach for joint optimization pilot and data power on average achieves $84\%$ of the SE with the optimal solution, with a substantially lower computational complexity.

Fig.~\ref{Fig-CDF-Power-2K2B} plots the CDF of power allocated to each pilot/data symbol with $K=\tau_p = 2$ and $M=300$. We do not include random pilot assignment and smart pilot assignment because they consume full power $200$ mW. We observe a vast energy saving for the remaining methods. For the pilot and data power control with exhaustive search, the power consumption on pilot signaling is on average only $111$ mW. The proposed design without data power control, each pilot symbol on average only spends $50$ mW. In comparison to the previous pilot design (e.g., the max-min SE in Fig.~\ref{Fig-CDF-2K2B} and the corresponding power consumption in Fig.~\ref{Fig-CDF-Power-2K2B}), we conclude that the proposed pilot structure can reduce the transmit power more significantly than the prior works while producing better max-min SE. This is because the prior works mitigate pilot contamination by assigning the orthogonal pilot signals to the users; that is, each user assigns power to only one of the basis vectors. In contrast, our design allows for non-orthogonal pilot signals with increasing feasible domain to minimize the total coherent interference in the network. By counting $95$\% of the pilot energy, the average number of non-zero pilot symbols in each pilot signal is on average about $1.02$ and $1.32$ for $\tau_p =K=2$ and $\tau_p =K=4$, respectively. Hence, some users have non-orthogonal pilots.
 \begin{figure}[t]
 	\centering
 	\includegraphics[trim=1cm 7.5cm 1.5cm 8cm, clip=true, width=3.3in]{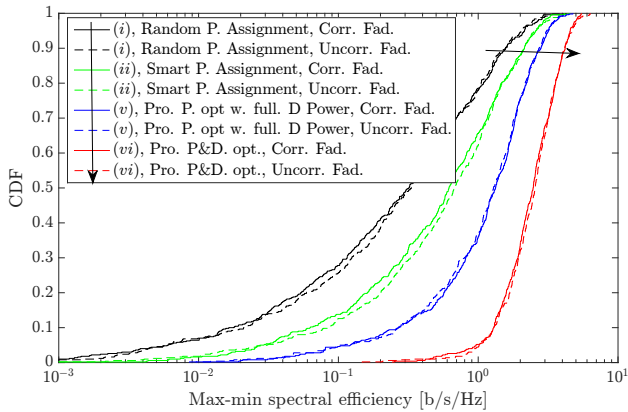} 
 	\caption{ Max-min SE [b/s/Hz] for the cases of uncorrelated Rayleigh fading and correlated Rayleigh fading, $M=300, K=\tau_p=2$.}
 	\label{Fig-CorrUncorr}
 \end{figure}
The pilot power optimization with exhaustive search has a computational complexity that grows exponentially in $K$. When increasing the number of users, we henceforth only compare the three pilot designs: proposed pilot design, smart pilot assignment, and random pilot assignment. Fig.~\ref{Fig-VariousK} displays the max-min SE as a function of the number of users and pilots with $M=300, K = \tau_p$. Our proposed design yields the highest performance among the related algorithms. Specifically, in comparison to random pilot assignment, the improvement varies from $2.39 \times$ with $K = \tau_p =2$ to $4.58 \times$ with $K = \tau_p = 8 $. Even though smart pilot assignment also performs much better than random pilot assignment, our proposed scheme provides substantial improvements over the state-of-the-art (e.g., we gain $1.72 \times$ with $ K=\tau_p =8$ for fixed data power). The figure also demonstrates a significant improvement with jointly optimizing pilot and data power for all the tested cases of the user and pilot number. In comparison to the case of only optimizing the pilots, the performance gain is from $1.87 \times$ to $4.01 \times$ as the number of users and pilots increases from $2$ to $8$. In addition, we observe a dramatic reduction of the max-min SE of all the pilot designs when the number of users increases due to stronger mutual interference. According to Fig.~\ref{Fig-VariousK}, the performances of the different pilot designs and power control algorithms are about the same for large numbers of users.

Fig.~\ref{Fig-VariousM} shows the max-min SE in the network as a function of the number of BS antennas. Among the three pilot structures, we still observe the worst performance with random pilot assignment, for which the max-min SE only increases from $0.10$ [b/s/Hz] to $0.23$ [b/s/Hz], when the number of antennas increases from $100$ to $900$ antennas. Our proposed pilot design obtains better performance than smart pilot assignment and the gap grows as more antennas BSs are added. For example, with $100$ BS antennas, the smart pilot assignments yields a max-min SE $0.20$ [b/s/Hz], while optimizing pilot signaling for fixed data powers gives $0.27$ [b/s/Hz] and the joint optimization of pilot and data powers yields $1.18$ [b/s/Hz]. The max-min SE is respectively $0.60$, $1.08$, and $2.12$ [b/s/Hz] for the three methods when the BSs are equipped with $900$ antennas. Hence the proposed pilot structure is especially suitable for Massive MIMO systems  with the large ratios $M/K$. Moreover, due to the pilot contamination, a saturation of the SE at about $2.10$ [b/s/Hz] is observed with the proposed joint optimization of pilot and data powers.

Only the special case $K = \tau_p$ has been investigated so far due to the high computational complexity of the pilot assignments from \cite{Xu2015a,Jin2015a}. In contrast, our algorithms can be applied for arbitrary lengths of the pilots. This is demonstrated in Fig.~\ref{Fig-VariousB}, which displays the max-min SE as a function of the length of the pilot signals. We observe noticeable gains when the pilot length increases for both $(v)$ and $(vi)$. For instance, when the pilot length per coherence interval increases from $1$ to $19$, we observe a gain up to $12.17 \times$  with the proposed joint pilot and data power optimization. Moreover, the pilot signals with lengths $\tau_p > KL$ still bring an increasing SE since it produces better channel estimation quality. However the gain is not significant, and therefore $\tau_p = KL$ is a good choice to eliminate pilot contamination and maximize the minimum SE if we have enough time-frequency resources. 

The impact of hardware impairments is shown in Fig.~\ref{Fig-HardwareImpairments}. Increasing the impairment level leads to a decrease in max-min SE, but the reduction is not dramatic since we are operating at low SE where the performance is already interference limited, so the additional self-interference has no dominant impact. When the impairment level reaches $0.2$, there is only a reduction of about $15.04 \%, 16.51\%, 18.88\%,$ and $28.87\%$ for random pilot assignment, smart pilot assignment, proposed pilot design with full data power, and proposed joint pilot and data power optimization respectively. This impairment level is hardly found in practice. For a more realistic impairment level of $\epsilon = 0.1$, the reduction only varies from $3.61\%$ to $7.27\%$. Furthermore that hardware impairments reduce more the performance of the algorithms giving higher SEs. Finally, we stress that our proposed algorithms provide substantial performance gains also under hardware impairments.

Motivated by the fact that practical channels may be spatially correlated, Fig.~\ref{Fig-CorrUncorr} shows the max-min SE of the pilot design (please refer to Section~\ref{Section:CorrUncorr} for the analysis) where the channel between user $i$ in cell $t$ and BS $l$ is modeled by using the exponential correlation model \cite{Loyka2001a}. In particular, $\mathbf{h}_{i,t}^l \sim \mathcal{CN}(\mathbf{0}, \mathbf{R}_{i,t}^l)$ and $\mathbf{R}_{i,t}^l$ is defined as
\begin{equation}
\mathbf{R}_{i,t}^l = \beta_{i,t}^l \begin{bmatrix}
1     &  r_{i,t}^{l,\ast}   &  \cdots     &  (r_{i,t}^{l,\ast})^{M-1} \\
 r_{i,t}^{l}       & 1     & \cdots    & (r_{i,t}^{l,\ast})^{M-2} \\
\vdots & \vdots &  \ddots      &  \vdots  \\
(r_{i,t}^{l})^{M-1}     & (r_{i,t}^{l})^{M-2}    &  \cdots   &  1
\end{bmatrix},
\end{equation}
where the correlation coefficient $r_{i,t}^l =\rho_{i,t}^l e^{\theta_{i,t}^l}$ and $\theta_{i,t}^l$ is the angle between the BS-user vector and the horizontal line. The coefficient magnitude $\rho_{i,t}^l$ is in the range of $[0,1]$ and it equals $0.5$ in our simulation. The figure demonstrates that our pilot design provides much higher max-min SE than the other pilot designs. Although the spatial correlation affects the user channels, the differences in SE are not significant compared to uncorrelated Rayleigh fading.

\section{Conclusion} \label{Section: Conclusion}
This paper proposed a novel pilot design for cellular Massive MIMO systems and combines the pilot assignment and uplink power allocation into a unified optimization framework. A key difference from prior work is that we treat the pilot signals as continuous optimization variables, instead of predefined vectors that should be assigned combinatorially. We used the proposed pilot structure to compute a new SE on the UL ergodic SE for Rayleigh fading channels with MR detection, both with ideal hardware and hardware impairments. A general max-min SE optimization problem was formulated. Although finding the globally optimal solution is NP-hard, we developed an algorithm that finds a local optimum that outperforms the previous state-of-the-art algorithms for pilot assignment and is close to the optimum obtained by the exhaustive search. The numerical results demonstrate the importance of jointly optimizing the pilots and transmit powers in order to improve the max-min SE.
\appendix
\subsection{Proof of Lemma~\ref{Lemma: Achievable_Rate}} \label{Appendix: Proof-Achievable_Rate}
The SINR value in the theorem is obtained by computing the expectations in \eqref{eq: SINR_k}. Due to the independence of the channel estimate and estimation error, the numerator of \eqref{eq: SINR_k} is computed as
\begin{equation} \label{eq: NumeratorProof}
\begin{split}
&p_{l,k} |\mathbb{E} \{ \mathbf{v}_{l,k}^H \mathbf{h}_{l,k}^l \} |^2 = p_{l,k} |\mathbb{E} \{ || \hat{\mathbf{h}}_{l,k}^l ||^2 \} |^2 \\
&=  p_{l,k} \left( M \beta_{l,k}^l \varrho_{l,k}^l \sum_{b=1}^{\tau_p} \hat{p}_{l,k}^b \right)^2.
\end{split}
\end{equation}
Note that this term also appears in the denominator of \eqref{eq: SINR_k}. The first expectation in  the denominator is reformulated as
\begin{equation} \label{eq: FirstTermMRT}
\begin{split}
 &\mathbb{E}\{ | \mathbf{v}_{l,k}^{H} \mathbf{h}_{i,t}^{l} |^2 \} / (\varrho_{l,k}^l)^2 =  \mathbb{E} \left\{ | \mathbf{y}_{l,k}^H \mathbf{h}_{i,t}^{l} |^2 \right\} \\
& =  \mathbb{E} \left\{ \left| \sum_{(i',t') \in \mathcal{S}}  \sum_{b=1}^{\tau_p} \sqrt{\hat{p}_{l,k}^b \hat{p}_{i',t'}^b } (\mathbf{h}_{i',t'}^{l})^H \mathbf{h}_{i,t}^l \right|^2 \right\} \\
& \; \; \; + \mathbb{E} \left\{ \left| \left( \mathbf{N}_{l} \pmb{\psi}_{l,k} \right)^H \mathbf{h}_{i,t}^l \right|^2 \right\}
 \end{split}
\end{equation}
by utilizing the independence of the noise and the channels.
Decomposing the first expectation into two parts, one related to $\mathbf{h}_{i,t}^l$ and the other comprises of the remaining channels, we obtain
\begin{equation} \label{eq:I1}
\begin{split}
&  \mathbb{E} \left\{ \left| \sum_{(i',t') \in \mathcal{S}}  \sum_{b=1}^{\tau_p} \sqrt{\hat{p}_{l,k}^b \hat{p}_{i',t'}^b } (\mathbf{h}_{i',t'}^{l})^H \mathbf{h}_{i,t}^l \right|^2 \right\} \\ 
&  =  \mathbb{E} \left\{ \left| \sum_{b=1}^{\tau_p} \sqrt{\hat{p}_{l,k}^b \hat{p}_{i,t}^b} (\mathbf{h}_{i,t}^{l})^H \mathbf{h}_{i,t}^l \right|^2 \right\} \\
&\; \; \; +  \mathbb{E} \left\{ \left| \sum_{(i',t') \in \mathcal{S}_{i,t} \setminus (i,t)}  \sum_{b=1}^{\tau_p} \sqrt{\hat{p}_{l,k}^b \hat{p}_{i',t'}^b} (\mathbf{h}_{i',t'}^{l})^H \mathbf{h}_{i,t}^l \right|^2 \right\} \\ 
& \stackrel{(a)}{=} M(M+1) \left(\sum_{b=1}^{\tau_p} \sqrt{\hat{p}_{l,k}^b \hat{p}_{i,t}^b} \right)^2 (\beta_{i,t}^l)^2 \\
&\; \; \; + M \mathlarger{\sum_{ (i',t') \in \mathcal{S}_{i,t} \setminus (i,t) } } \left( \sum_{b=1}^{\tau_p} \sqrt{ \hat{p}_{l,k}^b  \hat{p}_{i',t'}^b } \right)^2 \beta_{i',t'}^l \beta_{i,t}^l \\
&=  M^2 \left(\sum_{b=1}^{\tau_p} \sqrt{\hat{p}_{l,k}^b \hat{p}_{i,t}^b} \right)^2 (\beta_{i,t}^l)^2\\
&\; \; \; + M \sum_{(i',t') \in \mathcal{S} }\left( \sum_{b=1}^{\tau_p} \sqrt{ \hat{p}_{l,k}^b  \hat{p}_{i',t'}^b } \right)^2 \beta_{i',t'}^l \beta_{i,t}^l, 
\end{split}
\end{equation}
where $(a)$ is obtained by using Lemma $2.9$ in \cite{Tulino2004} to compute the fourth-order moment $\mathbb{E} \{ || \mathbf{h}_{i,t}^l ||^4\} $. The second expectation in \eqref{eq: FirstTermMRT} is computed from the independence property between additive noise and the original channel $\mathbf{h}_{i,t}^l$ as
\begin{equation} \label{eq: Part2ofFirstDe}
\mathbb{E} \{ | \left( \mathbf{N}_{l} \pmb{\psi}_{l,k} \right)^H \mathbf{h}_{i,t}^l |^2 \} = M \beta_{i,t}^l \sigma^2 \sum_{b=1}^{\tau_p} \hat{p}_{l,k}^b.
\end{equation}
Plugging \eqref{eq:I1} and \eqref{eq: Part2ofFirstDe} into \eqref{eq: FirstTermMRT}, the first expectation in the denominator of \eqref{eq: SINR_k} is
\begin{equation} \label{eq:FirstDenominator}
\begin{split}
& \mathbb{E}\{ | \mathbf{v}_{l,k}^{H} \mathbf{h}_{i,t}^{l} |^2 \} = M (\varrho_{l,k}^l)^2 \sum_{(i,t) \in \mathcal{S}} p_{i,t}  \beta_{i,t}^l \times\\
&\left( \frac{\beta_{l,k}^l \sum_{b=1}^{\tau_p} \hat{p}_{l,k}^b }{\varrho_{l,k}^l} +  M \left(\sum_{b=1}^{\tau_p} \sqrt{\hat{p}_{l,k}^b \hat{p}_{i,t}^b} \right)^2  \beta_{i,t}^l  \right).
 \end{split}
\end{equation}
The last expectation in the denominator of \eqref{eq: SINR_k} is  
\begin{equation} \label{eq: lastExpectation}
\mathbb{E} \{ \| \mathbf{v}_{l,k} \|^2 \} = M \beta_{l,k}^l \varrho_{l,k}^l \sum_{b=1}^{\tau_p} \hat{p}_{l,k}^b,
\end{equation}
by using the definition of MR detection in \eqref{eq:MRDetection} and the estimated channel distribution in Lemma~\ref{lemma: Distribution}. Plugging \eqref{eq: NumeratorProof}, \eqref{eq:FirstDenominator}, and \eqref{eq: lastExpectation} into \eqref{eq: SINR_k}, we obtain the SINR value as shown in the theorem after some simple algebra.

\setcounter{eqnback}{\value{equation}} \setcounter{equation}{84}
\begin{figure*}
	\begin{equation} \label{eq: GeneralHardwareRate}
	\overline{\mathrm{SINR}}_{l,k}^{\textrm{MR}} 
	=  \frac{ (1 - \epsilon^2) p_{l,k} | \mathbb{E} \{ \mathbf{y}_{l,k}^{H} \mathbf{h}_{l,k}^l  \} |^2}{ \sum\limits_{(i,t) \in \mathcal{S} } p_{i,t} \mathbb{E} \{ | \mathbf{y}_{l,k}^{H} \mathbf{h}_{i,t}^{l} |^2 \}  - (1 - \epsilon^2) p_{l,k} | \mathbb{E} \{ \mathbf{y}_{l,k}^{H} \mathbf{h}_{l,k}^{l} \} |^2 + \sigma^2 \mathbb{E} \{ \| \mathbf{y}_{l,k} \|^2 \}}.
	\end{equation}
	\hrulefill
\end{figure*}
\setcounter{eqncnt}{\value{equation}}
\setcounter{equation}{\value{eqnback}}

\subsection{Proof of Theorem~\ref{Theorem: KKTpoint}} \label{Appendix: Proof-Theorem:KKTPoint}
 Let us denote  the feasible set of the optimization problem \eqref{eq: Opt_Prob2} as
 \begin{equation}
 \begin{split}
 \mathcal{F} = \Big\{ &p_{l,k}, \hat{p}_{l,k}^b, \forall l,k,b:\; p_{l,k},\hat{p}_{l,k}^b \in \mathbb{R}_{+} ,  p_{l,k} \leq P_{\max,l,k}^d, \\
 &\frac{1}{\tau_p} \sum_{b=1}^{\tau_p} \hat{p}_{l,k}^b \leq P_{\max,l,k} \Big\}.
 \end{split}
 \end{equation}
The optimal solution set to the optimization problem \eqref{eq: Opt_Prob3} in the $i$th iteration is denoted as
\begin{equation}
 \mathcal{I}^{ (i)}= \left\{  p_{l,k}^{(i),\mathrm{opt}}, \hat{p}_{l,k}^{b,(i),\mathrm{opt}}, \forall l,k,b \right\}.
\end{equation}
 Let $\mathrm{SINR}_{l,k}^{ \textrm{MR}} (\tilde{f})$ denote the SINR value computed from \eqref{eq: SINR_MRC1} for any feasible point $\tilde{f} \in \mathcal{F}$, while $\doublewidetilde{\mathrm{SINR}}_{l,k}^{(i), \textrm{MR}} (\mathcal{I}^{(i)})$ denotes the approximated SINR value at the $i$th iteration computed from \eqref{eq: SINR_MRCApproximation} using the solution $\mathcal{I}^{(i)}$ obtained in the $i$th iteration. By using the approximation in \eqref{eq: SINRBound}, our family of SINR functions satisfies the following properties \cite{Chiang2007b}, $\forall l,k,b$:
 \begin{align}
 \mathrm{SINR}_{l,k}^{\textrm{MR}} \left(\tilde{f} \right) & \geq \doublewidetilde{\mathrm{SINR}}^{ (i),\textrm{MR}}_{l,k} \left(\tilde{f}\right), \forall \tilde{f} \in \mathcal{F}, \label{eq:SINRCond1} \\
 \mathrm{SINR}_{l,k}^{\textrm{MR}} \left(\mathcal{I}^{(i)}\right) & = \doublewidetilde{\mathrm{SINR}}^{(i+1),\textrm{MR}}_{l,k} \left(\mathcal{I}^{(i)}\right), \label{eq:SINRCond2} \\
 \frac{\partial \mathrm{SINR}_{l,k}^{\textrm{MR}} \left(\mathcal{I}^{(i)}\right) }{ \partial \hat{p}_{l,k}^{b} } & = \frac{\partial \doublewidetilde{\mathrm{SINR}}^{(i+1),\textrm{MR}}_{l,k} \left(\mathcal{I}^{(i)}\right)}{  \partial \hat{p}_{l,k}^{b} },
  \label{eq:SINRCond3}\\
\frac{\partial \mathrm{SINR}_{l,k}^{\textrm{MR}} \left(\mathcal{I}^{(i)}\right) }{ \partial p_{l,k} } & = \frac{\partial \doublewidetilde{\mathrm{SINR}}^{(i+1),\textrm{MR}}_{l,k} \left(\mathcal{I}^{(i)}\right)}{  \partial p_{l,k} }. \label{eq:SINRCond4}
 \end{align}
 The property \eqref{eq:SINRCond1} implies that the globally objective value to the geometric program \eqref{eq: Opt_Prob3} is also feasible for the signomial program \eqref{eq: Opt_Prob2} and we can construct the following chain of inequalities:
 \begin{equation}
 \begin{split}
 \ldots &=\doublewidetilde{\mathrm{SINR}}^{(i),\textrm{MR}}_{l,k} \left( \mathcal{I}^{(i-1)}\right) \overset{(a)}{\leq}  \doublewidetilde{\mathrm{SINR}}^{(i),\textrm{MR}}_{l,k} \left( \mathcal{I}^{(i)}\right) \\
 &
   \overset{(b)}{\leq} \mathrm{SINR}^{\textrm{MR}}_{l,k} \left( \mathcal{I}^{(i)}\right) \overset{(c)}{=} \doublewidetilde{\mathrm{SINR}}^{(i+1),\textrm{MR}}_{l,k} \left( \mathcal{I}^{(i)}\right) \leq \ldots,
  \end{split}
 \end{equation}
 where $(a)$ is obtained by solving the geometric optimization problem \eqref{eq: Opt_Prob3}. $(b)$ and $(c)$ follow by \eqref{eq:SINRCond1} and \eqref{eq:SINRCond2}, respectively. Thus, if $\xi^{(i),\mathrm{opt}}$ is the optimal objective value of \eqref{eq: Opt_Prob3}, then we obtain $\xi^{ (i+1),\mathrm{opt}} \geq \xi^{(i),\mathrm{opt}}$. The objective function is non-decreasing with the iteration index $i$, while the pilot and data power ranges make the SINR expressions continuous functions and they are bounded from above (i.e., $\doublewidetilde{\mathrm{SINR}}_{l,k}^{\textrm{MR}} < \infty, \forall l,k$) ensuring that \eqref{eq: Opt_Prob3} converges. If the convergence holds at the $i$th iteration (i.e., $\xi^{(i+1),\mathrm{opt}} = \xi^{(i),\mathrm{opt}}$), then the optimal solution set $\mathcal{I}^{ (i)}$ must also be a solution in the $(i+1)$th iteration (otherwise, it leads to $\xi^{ (i+1),\mathrm{opt}} > \xi^{(i),\mathrm{opt}}$). Hence \eqref{eq: Opt_Prob3} converges to a limit point. Furthermore, our constraint functions in \eqref{eq: Opt_Prob3} satisfy Slater's condition \cite{Boyd2004a} and ensure that the KKT conditions of \eqref{eq: Opt_Prob2} and \eqref{eq: Opt_Prob3} coincide, if we use \eqref{eq:SINRCond3} and \eqref{eq:SINRCond4} to do a matching procedure as in the proof of Theorem $1$ in \cite{Marques1978a}. Consequently, the limit point obtained when solving \eqref{eq: Opt_Prob3} in an iterative manner is a KKT local point to \eqref{eq: Opt_Prob2}.

\subsection{Proof of Theorem ~\ref{Theorem: ImperfectHWRate}} \label{Appendix: Proof-Theorem:ImperfectHWRate}

We use Lemma $4$ \cite{Bjornson2016a} to obtain the effective SINR of user~$k$ in cell~$l$ with MR detection as in \eqref{eq: GeneralHardwareRate}. It remains to compute the expectations, which is similar to proof of Lemma~\ref{Lemma: Achievable_Rate}, but more complicated since the channel estimates and estimation errors are neither Gaussian nor independent. Therefore, we summarize the main steps to compute these expectations.

By utilizing the uncorrelation property among the noise, the distortion and the propagation channels, the numerator of \eqref{eq: GeneralHardwareRate} is computed as
\setcounter{eqnback}{\value{equation}} \setcounter{equation}{85}
\begin{equation} \label{eq: NumeratorHW}
\begin{split}
& (1 - \epsilon^2) p_{l,k} | \mathbb{E} \{ \mathbf{y}_{l,k}^{H} \mathbf{h}_{l,k}^l  \} |^2 = (1 - \epsilon^2)^2 p_{l,k} M^2 (\beta_{l,k}^l)^2 \| \pmb{\psi}_{l,k} \|^4 \\
 &= (1 - \epsilon^2)^2 p_{l,k} M^2 (\beta_{l,k}^l)^2 \left( \sum_{b=1}^{\tau_p} \hat{p}_{l,k}^b\right)^2.
 \end{split}
\end{equation}
Similarly, thanks to the uncorrelation of the noise, the propagation channels, and the distortion term, the first expectation in  the denominator of \eqref{eq: GeneralHardwareRate} is computed as
\begin{equation} \label{eq: FirstExpectation}
\begin{split}
 &\mathbb{E} \{ | \mathbf{y}_{l,k}^H \mathbf{h}_{i,t}^l |^2 \}  =\\
 & M \beta_{i,t}^l \left( \sum_{(i',t') \in \mathcal{S}} \beta_{i',t'}^l \kappa_{i',t'} + \| \pmb{\psi}_{l,k} \|^2 \sigma^2 \right) + M^2 \kappa_{i,t} (\beta_{i,t}^l)^2,
 \end{split}
\end{equation}
where $\kappa_{i,t} =  (1 - \epsilon^2) | \pmb{\psi}_{i,t}^H \pmb{\psi}_{l,k}|^2 + \pmb{\psi}_{l,k}^H \pmb{\Lambda}_{i,t} \pmb{\psi}_{l,k}$, and then by utilizing \eqref{eq: Orthogonal_Property} we obtain the alternative expression for $\kappa_{i,t}$ that is shown in \eqref{eq: Kappa_it}. In the same manner, the last expectation in the denominator of \eqref{eq: GeneralHardwareRate} is
\begin{equation} \label{eq:LastExpectation}
\mathbb{E} \{ \| \mathbf{y}_{l,k} \|^2 \} = M \left( \sum_{(i,t) \in \mathcal{S} } \beta_{i,t}^l \kappa_{i,t} + \sigma^2 \| \pmb{\psi}_{l,k} \|^2 \right).
\end{equation}
Plugging  \eqref{eq: NumeratorHW}--\eqref{eq:LastExpectation} into \eqref{eq: GeneralHardwareRate}, we obtain the SINR given in the theorem by using the following identities:
\begin{align}
\| \pmb{\psi}_{l,k} \|^4 &= \left( \sum_{b=1}^{\tau_p} \hat{p}_{l,k}^b \right)^2 \\
| \pmb{\psi}_{i,t}^H \pmb{\psi}_{l,k}|^2 & = \left( \sum_{b=1}^{\tau_p} \sqrt{\hat{p}_{i,t}^b \hat{p}_{l,k}^b}  \right)^2 \\
 \pmb{\psi}_{l,k}^H  \pmb{\Lambda}_{i,t} \pmb{\psi}_{l,k} & =  \epsilon^2 \sum_{b=1}^{\tau_p} \hat{p}_{i,t}^b \hat{p}_{l,k}^b.
\end{align}


\bibliographystyle{IEEEtran}
\bibliography{IEEEabrv,refs}

\begin{IEEEbiography} 
	{Trinh Van Chien} (S'16) received the B.S. degree in Electronics and Telecommunications from Hanoi University of Science and Technology (HUST), Vietnam, in 2012. He then received the M.S. degree in Electrical and Computer Enginneering from Sungkyunkwan University (SKKU), Korea, in 2014. He is currently working towards the Ph.D. degree in Communication Systems from Link\"oping University (LiU), Sweden. His interest lies in convex optimization problems for wireless communications and image \& video processing. He was an IEEE wireless communications letters exemplary reviewer for 2016. He also received the award of scientific excellence in the first year of the 5Gwireless project funded by European Union Horizon's 2020.
\end{IEEEbiography}

\begin{IEEEbiography}
	{Emil Bj\"{o}rnson}(S'07, M'12, SM'17) received the M.S. degree in Engineering Mathematics from Lund University, Sweden, in 2007. He received the Ph.D. degree in Telecommunications from KTH Royal Institute of Technology, Sweden, in 2011. From 2012 to mid 2014, he was a joint postdoc at the Alcatel-Lucent Chair on Flexible Radio, SUPELEC, France, and at KTH. He joined Linköping University, Sweden, in 2014 and is currently Associate Professor and Docent at the Division of Communication Systems.
	
	He performs research on multi-antenna communications, Massive MIMO, radio resource allocation, energy-efficient communications, and network design. He is on the editorial board of the \textsc{IEEE Transactions on Communications} and the \textsc{IEEE Transactions on Green Communications and Networking}. He is the first author of the textbooks \emph{Massive MIMO Networks: Spectral, Energy, and Hardware Efficiency} (2017) and \emph{Optimal Resource Allocation in Coordinated Multi-Cell Systems} (2013). He is dedicated to reproducible research and has made a large amount of simulation code publicly available.
	
	Dr. Bj\"{o}rnson has performed MIMO research for more than ten years and has filed more than ten related patent applications. He received the 2016 Best PhD Award from EURASIP, the 2015 Ingvar Carlsson Award, and the 2014 Outstanding Young Researcher Award from IEEE ComSoc EMEA. He has co-authored papers that received best paper awards at WCSP 2017, IEEE ICC 2015, IEEE WCNC 2014, IEEE SAM 2014, IEEE CAMSAP 2011, and WCSP 2009.
\end{IEEEbiography}
\begin{IEEEbiography}
	{Erik G. Larsson} (S'99, M'03, SM'10, F'16) received the Ph.D. degree from Uppsala University,
	Uppsala, Sweden, in 2002. 
	
	He is currently Professor of Communication Systems at Linköping University (LiU) in Linköping, Sweden. He was with the Royal Institute of Technology (KTH) in Stockholm, Sweden, the University of Florida, USA, the George Washington University, USA, and Ericsson Research, Sweden. In 2015 he was a Visiting Fellow at Princeton University, USA, for four months. His main professional interests are within the areas of wireless communications and signal processing. He has co-authored some 130 journal papers on these topics, he is co-author of the two Cambridge University Press textbooks Space-Time Block Coding for Wireless Communications (2003) and Fundamentals of Massive MIMO (2016). He is co-inventor on 16 issued and many pending patents on wireless technology. 
	
	He is a member of the IEEE Signal Processing Society Awards Board during 2017–2019. From 2015 to 2016 he served as chair of the IEEE Signal Processing Society SPCOM technical committee. From 2014 to 2015 he 1246
	was chair of the steering committee for the IEEE Wireless Communications Letters. He was the General Chair of the Asilomar Conference on Signals, Systems and Computers in 2015, and its Technical Chair in 2012. He was Associate Editor for, among others,the IEEE Transactions on Communications (2010-2014) and the IEEE Transactions on Signal Processing (2006-2010). He received the IEEE Signal Processing Magazine Best Column Award twice, in 2012 and 2014, the IEEE ComSoc Stephen O. Rice Prize in Communications Theory in 2015, and the IEEE ComSoc Leonard G. Abraham Prize in 2017.
\end{IEEEbiography}

\end{document}